\newcommand{\mcF}{F_\mathrm{RNN}}
\newcommand*\dif{\mathop{}\!\mathrm{d}}
\newcommand{\mA}{\mathbf{A}}
\newcommand{\mB}{\mathbf{B}}
\newcommand{\mC}{\mathbf{C}}
\newcommand{\mH}{\mathbf{H}}
\newcommand{\mI}{\mathbf{I}}
\newcommand{\mJ}{\mathbf{J}}
\newcommand{\mK}{\mathbf{K}}
\newcommand{\mL}{\mathbf{L}}
\newcommand{\mQ}{\mathbf{Q}}
\newcommand{\mU}{\mathbf{U}}
\newcommand{\mW}{\mathbf{W}}
\newcommand{\va}{\mathbf{a}}
\newcommand{\vb}{\mathbf{b}}
\newcommand{\ve}{\mathbf{e}}
\newcommand{\vf}{\mathbf{f}}
\newcommand{\vg}{\mathbf{g}}
\newcommand{\vh}{\mathbf{h}}
\newcommand{\vk}{\mathbf{k}}
\newcommand{\vs}{\mathbf{s}}
\newcommand{\vu}{\mathbf{u}}
\newcommand{\vv}{\mathbf{v}}
\newcommand{\vx}{\mathbf{x}}
\newcommand{\vy}{\mathbf{y}}
\newcommand{\vz}{\mathbf{z}}
\newcommand{\vdelta}{\boldsymbol{\delta}}
\newcommand{\vone}{\boldsymbol{1}}
\newcommand{\vsigma}{\boldsymbol{\sigma}}
\newcommand{\Lreg}{L_\mathrm{reg}}
\newcommand{\mWin}{\mW^\mathrm{in} }
\newcommand{\tWin}{\mW^{\mathrm{in,}0} }
\newcommand{\mWout}{\mW^\mathrm{out} }
\newcommand{\tWout}{\mW^{\mathrm{out,}0}} 
\newcommand{\mcW}{\mathcal W}
\newcommand{\mcN}{\mathcal N}
\newcommand{\mcO}{\mathcal O}
\newcommand{\tmcW}{\mathcal{W}^0}
\newcommand{\tmJ}{\mJ^0}
\newcommand{\reals}{\mathbb R}
\newcommand{\tvx}{\vx^0}
\newcommand{\tvy}{\vy^0}
\newcommand{\tx}{x^0}
\newcommand{\tJ}{J^0}
\newcommand{\mcK}{\mathcal{K}}
\newcommand{\mcI}{\mathcal{I}}
\newcommand{\mcJ}{\mathcal{J}}
\newcommand{\mcX}{\mathcal{X}}
\newcommand{\tmcX}{\mathcal{X}^0}
\newcommand{\Tr}{\mathrm{Tr}}
\newcommand{\diag}{\mathrm{diag}}
\newcommand{\mcB}{\mathcal{B}}
\newcommand{\bmC}{\bar{\mC}}
\newcommand\extrafootertext[1]{%
    \bgroup
    \renewcommand\thefootnote{\fnsymbol{footnote}}%
    \renewcommand\thempfootnote{\fnsymbol{mpfootnote}}%
    \footnotetext[0]{#1}%
    \egroup
}
\newtheorem{lemma}{Lemma}
\newtheorem{proposition}{Proposition}
\newtheorem*{proposition*}{Proposition}
\title{\bf\large{Synaptic balancing: a biologically plausible local learning rule that provably increases neural network noise robustness without sacrificing task performance}}
\author{%
  Christopher H.~Stock\textsuperscript{1}\thanks{chstock@stanford.edu}, Sarah E. Harvey\textsuperscript{2},
  Samuel A. Ocko\textsuperscript{2}, Surya Ganguli\textsuperscript{2,3}\\
  \small{\textbf{1} Neuroscience Graduate Program, Stanford University School of Medicine, Stanford, CA}\\
  \small{\textbf{2} Department of Applied Physics, Stanford University, Stanford, CA}\\
  \small{\textbf{3} Stanford Institute for Human-Centered Artificial Intelligence, Stanford University, Stanford, CA}\\
}
\date{}
\begin{document}

\renewcommand{\abstractname}{\vspace{-\baselineskip}}  %

\twocolumn[
  \begin{@twocolumnfalse}
    \maketitle
    \begin{abstract}
    We introduce a novel, biologically plausible local learning rule that provably increases the robustness of neural dynamics to noise in nonlinear recurrent neural networks with homogeneous nonlinearities. Our learning rule achieves higher noise robustness without sacrificing performance on the task and without requiring any knowledge of the particular task. The plasticity dynamics---an integrable dynamical system operating on the weights of the network---maintains a multiplicity of conserved quantities, most notably the network's entire temporal map of input to output trajectories. The outcome of our learning rule is a synaptic balancing between the incoming and outgoing synapses of every neuron. This synaptic balancing rule is consistent with many known aspects of experimentally observed heterosynaptic plasticity, and moreover makes new experimentally testable predictions relating plasticity at the incoming and outgoing synapses of individual neurons. Overall, this work provides a novel, practical local learning rule that exactly preserves overall network function and, in doing so, provides new conceptual bridges between the disparate worlds of the neurobiology of heterosynaptic plasticity, the engineering of regularized noise-robust networks, and the mathematics of integrable Lax dynamical systems. 
    \vspace{1.4cm}
    \end{abstract}
  \end{@twocolumnfalse}
  ]

\section{Introduction}

\extrafootertext{* \texttt{chstock@stanford.edu}}  %

As any neural circuit computes, it is subject to additional fluctuations either within the circuit or from other brain regions \cite{faisal2008noise, tolhurst1983statistical, mainen1995reliability, shadlen1998variable}, 
and these fluctuations can impair performance \cite{Rumyantsev2020-nl, Ganguli2008-ad, Ganguli2010-ou, Kadmon2020-ro}.
The fundamental puzzle we address is what kind of plasticity rule can make the dynamics of a  neural circuit more robust to such fluctuations in a manner that: (1) works for any task; (2) is completely agnostic to the learning rule used to solve a task; and (3) does not impair network performance after a task is learned.
Our main contribution is the discovery of a plasticity rule that provably accomplishes all of three objectives for any nonlinear recurrent neural network with a homogeneous nonlinearity, such as the commonly studied rectified linear function.
Furthermore, our plasticity rule is biologically plausible and computable using only information that is locally available at each synapse and its adjacent neurons.
Finally, at a mathematical level our plasticity rule connects to the theory of integrable dynamical systems \cite{lax1968integrals, helmke1994optimization, chu2008linear} and heat diffusion, while experimentally, its features are similar to observed aspects of heterosynaptic plasticity \cite{chistiakova2014heterosynaptic, chistiakova2015homeostatic, oh2015heterosynaptic, el2018locally, field2020heterosynaptic}. 

The key idea behind our learning rule is to exploit a many-to-one mapping between patterns of synaptic strength and the task, or the temporal input-output map implemented by a recurrent neural network.
Indeed, modern neural network models of animal behavior typically possess a large number of tunable parameters---far more than necessary to perform a given simple behavior.
In this overparameterized regime, it has been observed across multiple behaviors and organisms that many distinct model configurations are able to generate equivalent levels of task performance \cite{prinz2004similar, naumann2016whole}.
This observation has spurred theoretical and numerical investigations into specific means by which a task may constrain network connectivity and function \cite{Maheswaranathan2019-dw, gao2015simplicity, barak2013fixed, sussillo2015neural, mante2013context}.
However, the complex, nonlinear nature of many classes of network models in neuroscience---notably recurrent neural networks (RNNs)---has made it difficult in many cases to obtain a precise theoretical characterization of the space of synaptic patterns that all map to the same task \cite{barak2017recurrent}. 

Besides a theoretical interest in formally characterizing equivalence classes of synaptic weights that solve the same task, we are also motivated by a complementary biological question: how might neural systems actually implement local plasticity rules which take advantage of network overparameterization to maintain desirable functional properties, including task performance, in the presence of internal and external sources of variation?
Studies of homeostatic plasticity in cortical circuits have shed light on synaptic mechanisms thought to maintain stable computation, including synaptic scaling, heterosynaptic plasticity, and other compensatory processes \cite{turrigiano2008self, turrigiano2017dialectic, chen2013heterosynaptic, chistiakova2015homeostatic, zenke2017hebbian}. 
However, a fundamental neuroscientific question remains: can such local homeostatic or compensatory plasticity rules operate in such a way so as not to impair task performance, while still accruing some other additional benefit? 

In this work, we provide insights into both the nature of overparameterization in recurrent neural networks and how local plasticity rules might exploit this overparameterization to specifically improve network robustness without impairing task performance.  First we precisely characterize the equivalence classes of synaptic weights that all solve the same task in nonlinear recurrent networks with homogenous nonlinearities.  Using this characterization, we derive a simple associated plasticity rule that locally modifies synaptic weights while remaining within these equivalence classes. Intriguingly, we find our theoretically derived plasticity rule resembles experimentally observed heterosynaptic plasticity rules.

\subsection{Outline of this paper}

The structure of this paper is as follows. 
In \S\ref{sec:TPT} we introduce the nonlinear recurrent network model and a transformation of network weights that exactly preserves task performance.
In \S\ref{sec:robustness} we formalize a notion of noise robustness in recurrent networks and show that this quantity is convex with respect to the coordinates of the task-preserving transformation.
In \S\ref{sec:synaptic-balancing-dynamics} we derive a dynamical transformation of the network, called synaptic balancing, which maximizes robustness while exactly maintaining task performance, and which is implementable entirely by local update rules.
In \S\ref{sec:equilibrium-existence-stability} we show that synaptic balancing is exponentially stable in any recurrent network which does not contain irreducible feedforward structure.
In \S\ref{sec:generalizations-lax} we introduce several generalizations of our model and show that synaptic balancing is an instance of a broader class of integrable dynamical systems on synaptic weight matrices known as Lax dynamical systems.  These dynamical systems lead to isospectral flows on matrices that preserve all eigenvalues of the matrix. 

Turning to the behavior of synaptic balancing alongside task-relevant learning dynamics, in \S\ref{sec:regularized-networks-balanced} we prove that a broad class of regularized networks naturally approaches the equilibrium of our rule through training.
Conversely, in \S\ref{sec:trained-network-improvement} we show empirically that our rule is able to improve the task performance of trained networks in previously unseen noisy regimes.

Finally, we address the role of synaptic balancing as a candidate local plasticity rule for maintaining stable network computation in neural circuits. In \S\ref{sec:exact-approx-solutions} we present exact and approximate solutions to the trajectory of synaptic balancing, deriving a formal connection between synaptic balancing and heat diffusion in a network.
In \S\ref{sec:heterosynaptic-plasticity-perturbation} we draw a connection between synaptic balancing and experimentally observed patterns of heterosynaptic plasticity in cortical synapses.

\section{A task-preserving transformation defines a manifold of equivalent recurrent networks}
\label{sec:TPT}

In this section, we first describe a class of nonlinear recurrent neural networks that has been extensively studied in diverse contexts \cite{Vogels2005-sr,sompolinsky1988chaos, mante2013context,Ganguli2008-ad,Ganguli2010-ou,barak2017recurrent,Maheswaranathan2019-kk,Maheswaranathan2019-dw}. We then describe a natural symmetry acting on the weight space of these neural networks that {\it exactly preserves} the entire temporal mapping of input to output trajectories.
Since the input-output mapping determines the task a neural network performs, the action of this symmetry enables us to traverse the manifold of weight configurations that preserve the task performed by any neural network. 

\subsection{Recurrent network model}
Consider a recurrent rate network of $N$ neurons with an $N \times N$ synaptic weight matrix $\mJ$, such that neuron $j$ is connected to neuron $i$ with synaptic weight $J_{ij}$.
The vector of neural activity $\vx \in \reals^N$ and readout vector $\vy$ evolve under a time-varying external input $\vu$ as
\begin{align}
\label{eq:neural-dynamics}
  \tau \dot \vx &= \vf_\mcW(\vx, \vu) = -\vx + \mJ \phi(\vx) + \mWin \vu,  \\
\label{eq:readout}
  \vy &= \mWout \vx.
\end{align}
Here $\tau$ is a fast timescale of neural dynamics, $\phi$ is a typically nonlinear scalar function applied element-wise to $\vx$, and the tuple $\mcW = (\mWin, \, \mJ, \, \mWout)$ is the {weight configuration} which parameterizes the network.

We assume for simplicity that  neural activity  is initialized at the origin and runs until some time $T$.
The dynamics of the full network, and in particular the output trajectory $\vy(t)$, are a deterministic function of the input trajectory $\vu(t)$ and the weight configuration $\mcW$.
More formally, the input-output map $\mcF$ computed by the network is the map from input to output trajectories under a particular weight configuration:  
\begin{align}
    \label{eq:input-output-map}
    \mcF(\mcW): \: \{\vu(t)\}_{t=0}^T \mapsto \{\vy(t)\}_{t=0}^T
\end{align}

In this work, we introduce a model of synaptic updates whose properties are well behaved when $\phi$ is a homogeneous function; that is, when $\phi(\alpha x) = \alpha \phi(x)$ for all $x\in\reals$ and $\alpha\in\reals^+$.
Common activation functions which satisfy this condition include the linear ($\phi(x)=x$) and rectified linear ($\phi(x) = \max(0,x)$) units.
 The rest of this work assumes that $\phi$ is homogeneous.

\subsection{Task-preserving transformation}

\begin{figure*}
\makebox[\textwidth][c]{
\includegraphics[width=6in]{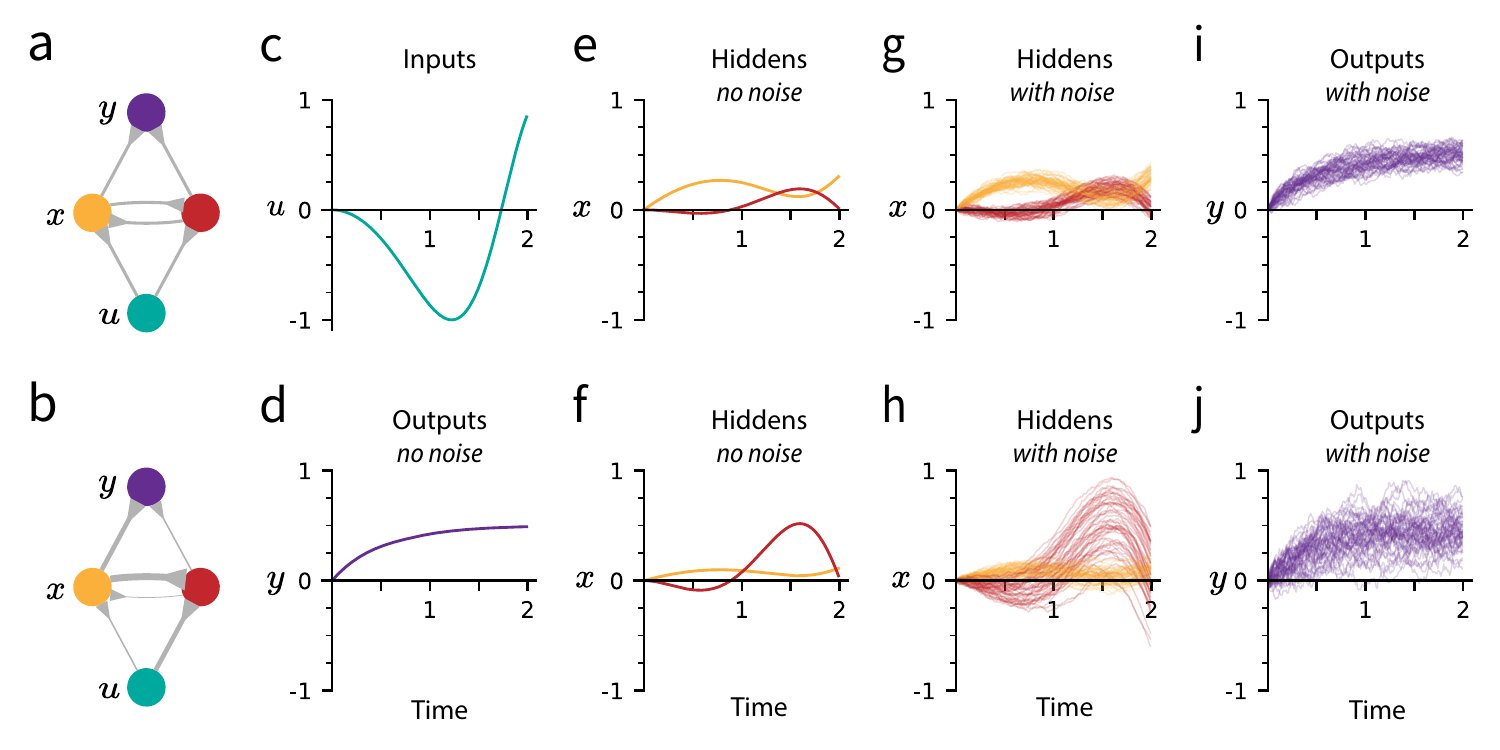}
}
\caption{ 
Two nonlinear recurrent networks consisting of input, hidden, and output neurons possess an identical input-output relationship, but behave differently in the presence of noise.
\textbf{(a-b)} A single input $u$ (teal) drives two recurrent neurons $x_1$ (yellow) and $x_2$ (red) which project to a single output $y$ (purple).
The connectivity patterns of the two networks are related by the task-preserving transformation \eqref{eq:task-preserving-transformation}.
Line thickness denotes synaptic strength.
\textbf{(c)} Input trajectory, fed to both networks. Horizontal axis in all panels is time.
\textbf{(d)} Output trajectory, produced by both networks when run according to the deterministic dynamics \eqref{eq:neural-dynamics}.
\textbf{(e-f)} Hidden unit neural activity under deterministic dynamics, for networks \textbf{a} and \textbf{b} respectively. Trajectories are identical up to a per-neuron scale factor, determined by the parameters of the task-preserving transformation.
\textbf{(g-h)} Hidden neuron neural activity for networks \textbf{a} and \textbf{b} respectively  when additive Gaussian noise is injected into the neural dynamics \eqref{eq:neural-dynamics}.  
50 trials are shown.
\textbf{(i-j)} Output neuron neural activity in the noisy case. 
}
\label{fig:task-preserving-transformation}
\end{figure*}

We now describe a symmetry in weight space that exactly preserves the map \eqref{eq:input-output-map}. Indeed, the observation that many distinct weight configurations may yield quantitatively similar network behavior is widespread both in neuroscience, where simulations of the crustacean stomatogastric ganglion found a range of synaptic strengths producing a given network output \cite{prinz2004similar}, and in machine learning, in which non-convex cost functions possess many roughly equivalent local minima \cite{dauphin2014identifying,Maheswaranathan2019-dw}.
In the class of networks we study, part of the degeneracy between weights and a given task originates from a symmetry in weight space that exactly preserves the deterministic input-output map computed by the network.
Intuitively, this symmetry scales neurons' inputs while reciprocally scaling their outputs, such that the overall function computed by each neuron remains intact.

More precisely, consider a map $\pi$, parameterzied by $\vh \in \reals^N$, which takes as input a weight configuration $\mcW = (\mWin,\, \mJ,\, \mWout)$, and produces as an output the weight configuration
\begin{align}
    \label{eq:task-preserving-transformation}
    \pi_\vh(\mcW) = (e^{-\mH} \mWin, \: e^{-\mH} \mJ e^{\mH}, \: \mWout e^{\mH}).
\end{align}
Here $\mH$ denotes the diagonal matrix $\mathrm{diag}\{\vh\}$.

It is worth emphasizing some basic properties of the transformation $\pi_\vh$.
The sign of synapses is preserved, and synapses which are initially zero remain so.
Thus, the transformation does not modify the basic wiring diagram of the network---for example by creating or removing synapses, or changing excitatory to inhibitory synapses and vice versa.
Rather, it scales the strength of existing synapses by a positive quantity.
In addition $\pi_\vh$ acts on the recurrent weight matrix $\mJ$ through a similarity transformation.  Thus the entire eigenspectrum of the recurrent weight matrix is conserved under this map. 

However, $\pi_\vh$ satisfies an additional important condition: it exactly preserves the entire input-output output map of the network. 
For this reason, we refer to $\pi_\vh$ as the {task-preserving transformation}.
More formally, for any (finite) value of $\vh$, 
\begin{align}
    \label{eq:task-preserving-transformation-input-output-invariance}
    \mcF(\mcW) = \mcF(\pi_\vh(\mcW)).
\end{align} 
This claim is summarized in Fig. \ref{fig:task-preserving-transformation}, which shows in panels a - f that while the dynamics of the hidden units for two networks related by the task-preserving transformation are different, the output trajectories agree. 
Equation \eqref{eq:task-preserving-transformation-input-output-invariance} is proved in the following proposition.

\begin{proposition}[Task-preserving transformation]
\label{prop:task-preserving-transformation}
The transformation \eqref{eq:task-preserving-transformation} exactly preserves the input-output relationship of the neural dynamics \eqref{eq:neural-dynamics}.
Given two networks receiving the same time course of inputs $\vu(t)$ and with weight configurations $\tmcW$ and $\mcW = \pi_\vh(\tmcW)$ respectively, then:
\begin{enumerate}
    \item If $\tvx(t)$ is the time course of hidden unit neural activity under $\tmcW$, then $e^{-\mH} \tvx(t)$ is the time course of hidden unit neural activity under $\mcW$.
    \item If $\tvy(t)$ is the time course of output neural activity under $\tmcW$, then $\tvy(t)$ is also the time course of output unit neural activity under $\mcW$. 
\end{enumerate}
\end{proposition}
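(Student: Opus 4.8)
The plan is to verify by direct substitution that the per-neuron rescaled trajectory $e^{-\mH}\tvx(t)$ solves the transformed dynamics, and then to pin it down as \emph{the} solution using uniqueness for the initial value problem. The one structural fact that drives the whole argument is the interplay between homogeneity of $\phi$ and the fact that $e^{-\mH} = \diag\{e^{-h_1},\dots,e^{-h_N}\}$ is diagonal with strictly positive diagonal entries: since $\phi$ is applied elementwise and $\phi(\alpha x)=\alpha\phi(x)$ for $\alpha>0$, we obtain the commutation identity $\phi(e^{-\mH}\vx) = e^{-\mH}\phi(\vx)$ for all $\vx\in\reals^N$; that is, elementwise $\phi$ commutes with positive diagonal rescalings. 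This is exactly where finiteness of $\vh$ and homogeneity of $\phi$ are used: $e^{\pm\mH}$ must be well-defined and each scalar $e^{-h_i}$ must be positive.

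First I would substitute the ansatz $\vx(t) := e^{-\mH}\tvx(t)$ into the right-hand side of \eqref{eq:neural-dynamics} for the transformed weights $\mcW = \pi_\vh(\tmcW)$, for which $\mWin = e^{-\mH}\tWin$, $\mJ = e^{-\mH}\tmJ e^{\mH}$, and $\mWout = \tWout e^{\mH}$. The leak term is $-\vx = -e^{-\mH}\tvx$, the input term is $\mWin\vu = e^{-\mH}\tWin\vu$, and for the recurrent term the commutation identity gives
\[
\mJ\,\phi(\vx) \;=\; e^{-\mH}\tmJ e^{\mH}\,\phi(e^{-\mH}\tvx) \;=\; e^{-\mH}\tmJ e^{\mH}e^{-\mH}\phi(\tvx) \;=\; e^{-\mH}\tmJ\,\phi(\tvx).
\]
Summing the three terms, $\vf_\mcW(\vx,\vu) = e^{-\mH}\bigl(-\tvx + \tmJ\phi(\tvx) + \tWin\vu\bigr) = e^{-\mH}\vf_{\tmcW}(\tvx,\vu)$. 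Since $\tvx$ solves $\tau\dot{\tvx} = \vf_{\tmcW}(\tvx,\vu)$ and $e^{-\mH}$ is constant, left-multiplying that equation by $e^{-\mH}$ yields $\tau\dot{\vx} = e^{-\mH}\vf_{\tmcW}(\tvx,\vu) = \vf_\mcW(\vx,\vu)$, so $\vx(t)$ satisfies the neural dynamics under $\mcW$.

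Next I would identify $\vx(t)$ as the true trajectory. Both networks start at the origin and $e^{-\mH}\mathbf{0}=\mathbf{0}$, so $\vx(0)=\mathbf{0}$ is the correct initial condition; since $\phi$ is Lipschitz (it is linear or rectified linear), $\vf_\mcW(\cdot,\vu)$ is Lipschitz uniformly on the finite interval $[0,T]$, so the initial value problem has a unique solution, and hence the hidden-unit trajectory under $\mcW$ equals $e^{-\mH}\tvx(t)$ — this is claim~(1). Claim~(2) then follows by composing with the readout: $\vy(t) = \mWout\vx(t) = \tWout e^{\mH}e^{-\mH}\tvx(t) = \tWout\tvx(t) = \tvy(t)$, where the cancellation $e^{\mH}e^{-\mH}=\mI$ collapses the output back to the original. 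Since this holds for every input trajectory $\vu(t)$, it gives the full invariance $\mcF(\mcW)=\mcF(\pi_\vh(\tmcW))$ of \eqref{eq:task-preserving-transformation-input-output-invariance}.

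No step here is a genuine obstacle; the entire content is the commutation identity $\phi\circ e^{-\mH} = e^{-\mH}\circ\phi$ together with routine bookkeeping and a standard appeal to ODE uniqueness. The only points I would state carefully are that homogeneity is invoked only for positive scalars (matching $e^{-h_i}>0$, which is why the theory is restricted to homogeneous $\phi$ and finite $\vh$), and that the readout cancellation is precisely what makes the output — as opposed to the hidden state — exactly invariant rather than merely rescaled.
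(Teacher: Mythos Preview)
Your proof is correct and takes essentially the same approach as the paper: both establish the identity $\vf_\mcW(e^{-\mH}\tvx,\vu)=e^{-\mH}\vf_{\tmcW}(\tvx,\vu)$ via homogeneity of $\phi$, match initial conditions at the origin, and then read off the output invariance from the cancellation $e^{\mH}e^{-\mH}=\mI$. The only cosmetic difference is that the paper multiplies by $e^{\mH}$ and integrates, whereas you verify the ansatz directly and invoke ODE uniqueness explicitly; your version is arguably cleaner on that last point.
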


 The proof of proposition 1 is straightforward and can be found in \S\ref{sec:prop1}.

Given an initial weight configuration $\tmcW$, we define the {\it task-preserving manifold} of $\tmcW$ to be the manifold of weight configurations accessible from $\tmcW$ by the task-preserving transformation, i.e., the orbit of $\tmcW$ under the symmetry $\pi_\vh$:
$\{ \pi_\vh(\tmcW): \: \vh \in \reals^N \}$.
The vector $\vh$ provides a set of coordinates on this manifold. 
When referring to the task-preserving manifold we implicitly assume there exists some reference weight configuration $\mcW^0$ at which $\vh=0$.

\section{A measure of robustness to noise in neural activity}
\label{sec:robustness}

We have shown that every weight configuration on the task-preserving manifold computes the same deterministic input-output map.
However, biological systems are inherently noisy \cite{faisal2008noise}, and it has been found that cortical spike trains vary at the sub-millisecond level across trials with identical inputs \cite{tolhurst1983statistical, mainen1995reliability, shadlen1998variable}.
Interestingly, even two networks which compute the same deterministic input-output map may exhibit differing responses to neural noise.
In Fig. \ref{fig:task-preserving-transformation}, we give an example of two networks which perform identically in the deterministic setting but which respond differently when noise is injected to hidden units.

In this section we introduce a quantitative notion of a network's robustness to noise, which we call sensitivity.
This function describes the degree to which noise in neural activity may interfere with the underlying computation of the network.
We connect our notion of sensitivity to the gain of neurons in the network and show that the sensitivity is well-behaved on the task-preserving manifold; in particular, that it possesses a convex geometry in the coordinates $\vh$.

\subsection{Robustness of neural dynamics to random perturbations}
A recurrent network whose dynamics are easily perturbed by small variations in neural activity is unlikely to perform a task robustly in the presence of neural noise.
To capture this notion, we consider the magnitude of the response to a small, isotropic Gaussian perturbation $\vdelta \sim \mathcal N(0, \varepsilon^2 I) $ to the neural dynamics \eqref{eq:neural-dynamics}, averaged over the distribution of states $\mcX$ visited by the network during a task:
\begin{align*}
    \frac{1}{\varepsilon^2} \langle \| \vf(\vx + \vdelta,\vu) - \vf(\vx,\vu) \|_2^2 \rangle_{\vdelta, \vx},
\end{align*}
where for simplicity we have dropped from $\vf$ an explicit dependence on the weights $\mcW$.
We define the {sensitivity} of a network to be the first-order approximation of this quantity,
\begin{align}
    S
    &= 
    \left\langle \left\| \frac{\partial \vf}{ \partial \vx} \right\|_F^2  \right\rangle_{\vx\sim\mcX},
    \label{eq:sensitivity}    
\end{align}
where the approximation is made via Taylor expansion of $\vf$ about $\vx$. 
Networks with lower sensitivity are less easily pushed away from their original trajectories and are therefore more likely to be robust to noise while performing tasks.
In this paper, robustness refers to $S^{-1}$,  the reciprocal of sensitivity.

In the case of the neural dynamics \eqref{eq:neural-dynamics}, the Jacobian matrix takes the form,
\begin{align}
\label{eq:neural-dynamics-jacobian}
\frac{\partial \vf}{ \partial \vx}
&= -\mI + \mJ \,\mathrm{diag}\{\phi'(\vx) \}.
\end{align}
This expression makes use of the gain, $\phi'(x_i)$, of neuron $i$. With respect to the distribution of neural activity $\mcX$, the gain has first and second moments
\begin{align}
    \label{eq:moments-gain-defn}
    \begin{split}
    \mu_i &= \langle \phi'(x_i) \rangle_{x_i\sim\mcX_i} \\
    \sigma^2_i &= \langle \phi'(x_i)^2 \rangle_{x_i\sim\mcX_i}
    \end{split}
\end{align}
for each neuron $i$.
We may rewrite the definition of sensitivity \eqref{eq:sensitivity} using the Jacobian of the neural dynamics \eqref{eq:neural-dynamics-jacobian} in terms of the moments \eqref{eq:moments-gain-defn} to obtain

    \begin{strip} 
\begin{align}
    \label{eq:S-quadratic-in-J}
    \nonumber
    S 
    &=
    \left\langle \left\| \frac{\partial \vf}{ \partial \vx} \right\|_F^2  \right\rangle_{\vx\sim\mcX} \\[3pt]
    \nonumber
    &= \left\langle 
    \Tr \left[ (-\mI + \mJ \,\mathrm{diag}\{\phi'(\vx))(-\mI + \mJ \,\mathrm{diag}\{\phi'(\vx))^T \right]
    \right\rangle_{\vx\sim\mcX} \\[3pt]
    \nonumber
    &= 
    \Tr \left[ 
    \mI - \mJ \,  \left\langle \mathrm{diag}\{\phi'(\vx) \}\right\rangle_\mcX
    - \left\langle  \mathrm{diag}\{\phi'(\vx) \}\right\rangle_\mcX \, \mJ^T
    + \mJ \left\langle \mathrm{diag}\{ \phi'(\vx) \}^2 \right\rangle_\mcX \mJ^T
    \right]
     \\[3pt]
    \nonumber
    &= 
    N - 2\, \Tr \left[ 
     \mJ \,  \left\langle \mathrm{diag}\{\phi'(\vx) \}\right\rangle_\mcX
     \right]
    + \Tr \left[  
    \mJ \left\langle \mathrm{diag}\{ \phi'(\vx) \}^2 \right\rangle_\mcX \mJ^T
    \right]
     \\[3pt]     
    &= 
    \sum_{ij} \sigma^2_j J^2_{ij}  - 2 \sum_i \mu_i J_{ii} + N.
\end{align}
    \end{strip}

The sensitivity, then, can be succinctly written as a function of the recurrent weights and the moments of the neural gains.
However, because the moments $\mu$ and $\sigma^2$ are defined as an average over the distribution of network states that depends---generally intractably---on the weights themselves, sensitivity is not, as it may appear, a straightforward quadratic function of $\mJ$.  
We next show that, perhaps surprisingly, sensitivity turns out to be well behaved as a function on the task-preserving manifold.

\subsection{Sensitivity is a convex function on the task-preserving manifold}

Because networks attaining weight configurations of lower sensitivity might see improved task performance in noisy environments, it is useful to analytically characterize how $S$ varies as a function of network weights.
We have just mentioned the complications of studying this question in general.
Here, we find that $S$ becomes tractable when we consider its variation on the task-preserving manifold, parameterized by the coordinates $\vh$. In particular, we show that $S$ is convex with respect to $\vh$.

\begin{proposition}
\label{prop:sensitivity-convex}
When considered as a function on the coordinates $\vh$ of the task-preserving manifold,
the sensitivity $S$ is a convex function
    \begin{align}
        \label{eq:S-cvx-simple}
        S = \sum_{ij} c_{ij}^0 e^{p(h_j - h_i)} + S_\mathrm{const}
    \end{align}
    for some constants $\{c_{ij}^0\}_{i,j=1}^N$, $p$, and $S_\mathrm{const}$, where $c^0_{ij}\geq0$ for all $i,j$.
\end{proposition}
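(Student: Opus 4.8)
The plan is to substitute the task-preserving transformation directly into the closed form \eqref{eq:S-quadratic-in-J} for the sensitivity and to observe that the gain moments drop out, leaving an explicit sum of exponentials of affine functions of $\vh$. First I would record how each ingredient of \eqref{eq:S-quadratic-in-J} transforms under $\pi_\vh$. From \eqref{eq:task-preserving-transformation}, the recurrent weights transform as $J_{ij} \mapsto e^{h_j - h_i} J_{ij}$; in particular the diagonal entries $J_{ii}$ are unchanged, since $h_i - h_i = 0$. The crucial point is that the gain moments $\mu_i$ and $\sigma_i^2$ of \eqref{eq:moments-gain-defn} are \emph{invariant} under $\pi_\vh$. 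This follows from Proposition~\ref{prop:task-preserving-transformation}: on the transformed network the hidden activity is $e^{-\mH}\tvx(t)$, so the $i$-th coordinate is rescaled by the positive factor $e^{-h_i}$; and because $\phi$ is homogeneous of degree one, differentiating $\phi(\alpha x)=\alpha\phi(x)$ shows $\phi'$ is homogeneous of degree zero, i.e. $\phi'(e^{-h_i}x)=\phi'(x)$ for the (positive) scale $e^{-h_i}$. Hence the distribution of each gain $\phi'(x_i)$ over the visited states is unchanged, and so are its first two moments.

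Next I would plug these facts into \eqref{eq:S-quadratic-in-J}: the cross term $-2\sum_i \mu_i J_{ii}$ and the constant $N$ are unaffected, while $\sum_{ij}\sigma_j^2 J_{ij}^2$ becomes $\sum_{ij}\sigma_j^2 (J^0_{ij})^2\, e^{2(h_j - h_i)}$, where $J^0_{ij}$ are the recurrent weights at the reference point $\vh=0$. This is exactly the claimed form \eqref{eq:S-cvx-simple} with $p = 2$, $c^0_{ij} = \sigma_j^2 (J^0_{ij})^2 \ge 0$, and $S_\mathrm{const} = N - 2\sum_i \mu_i J^0_{ii}$; the diagonal terms $i=j$ may be left inside the sum since $e^{p\cdot 0}=1$, or absorbed into $S_\mathrm{const}$, as one prefers. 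Convexity is then immediate: $\vh \mapsto p(h_j - h_i)$ is affine, the exponential is convex and nondecreasing, so each summand is a convex function of $\vh$, and a nonnegative linear combination of convex functions plus a constant is convex.

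The one step that genuinely requires care is the invariance of the gain moments, because $\mu$ and $\sigma^2$ are defined through an average over the state distribution $\mcX$, which itself moves as we traverse the task-preserving manifold; the sensitivity is therefore \emph{not} simply the quadratic \eqref{eq:S-quadratic-in-J} with $\mJ$ replaced by $e^{-\mH}\mJ e^{\mH}$ and everything else frozen. The resolution is contained entirely in Proposition~\ref{prop:task-preserving-transformation} together with degree-one homogeneity of $\phi$: the state distribution is pushed forward by a positive diagonal rescaling, and $\phi'$ is invariant under such a rescaling, so the averages defining $\mu_i$ and $\sigma_i^2$ are literally the same numbers at every point of the manifold. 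Once this is established the remainder is bookkeeping. (If $\phi$ is the ReLU one should also note that its non-differentiability at the origin is confined to a measure-zero set and does not affect the averages.)
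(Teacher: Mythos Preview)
Your proposal is correct and follows essentially the same route as the paper's proof: substitute $J_{ij}=J_{ij}^0 e^{h_j-h_i}$ into \eqref{eq:S-quadratic-in-J}, invoke Proposition~\ref{prop:task-preserving-transformation} together with degree-one homogeneity of $\phi$ to conclude that $\phi'$ (hence $\mu_i$ and $\sigma_i^2$) is invariant under the positive diagonal rescaling of the hidden state, and then read off $p=2$, $c_{ij}^0=(\sigma_j^0)^2(J_{ij}^0)^2$, and $S_\mathrm{const}=N-2\sum_i\mu_i^0 J_{ii}^0$, with convexity following from the nonnegative combination of exponentials of affine functions. Your explicit emphasis on why the invariance of the gain moments is the only nontrivial step, and the aside about the measure-zero non-differentiability of ReLU, are welcome clarifications but do not change the argument.
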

\begin{proof}

Assume there is some original network $\tmcW = (\tWin, \mJ^0, \tWout)$, whose distribution of neural activity is $\tmcX$, and whose gains have moments $\{\mu_i^0\}_{i=1}^N$ and $\{ (\sigma_i^{0} )^2\}_{i=1}^N$.
Consider a transformed network $\mcW = \pi_\vh(\tmcW)$, with analogous quantities $\mcX$, $\{\mu_i\}_{i=1}^N$, and $\{\sigma^2_i\}_{i=1}^N$. 
From \eqref{eq:task-preserving-transformation}, the recurrent connectivity of the transformed network is $J_{ij} = J_{ij}^0 e^{h_j - h_i}$.
The sensitivity of the transformed network in terms of the task-preserving transformation is obtained by plugging this $J_{ij}$ into \eqref{eq:S-quadratic-in-J}:
 \begin{align}
    \label{eq:S-cvx-intermediate}
    S
    &= \sum_{ij} \sigma^2_j  (\tJ_{ij})^2  e^{2(h_j - h_i)}  - 2 \sum_i \mu_i  \tJ_{ii} + N.
\end{align}

For convexity, it is sufficient to show that the moments of the gain, $\mu_i$ and $\sigma^2_i$, are constant with respect to the task-preserving transformation; i.e., that $\mu_i=\mu^0_i$ and $\sigma^2_i=(\sigma^0_i)^2$ for each neuron $i$.
From Prop. \ref{prop:task-preserving-transformation}, the transformed neural activity can be written in terms of the original rates as $x_i = e^{-h_i} \tx_i$ for all $t>0$.
By assumption, $\phi$ satisfies $\phi(\alpha x) = \alpha \phi(x)$,  and therefore $\phi'(\alpha x) = \phi'(x)$, for all $x \in \reals$ and $\alpha \in \reals^+$.
So $\phi'(x_i) = \phi'( e^{-h_i} \tx_i) = \phi'(\tx_i)$, and
\begin{align*}
\mu_i
= \langle \phi'(x_i) \rangle 
&= \langle \phi'(\tx_i) \rangle 
= \mu^0_i \ \\
\sigma^2_i
= \langle \phi'(x_i)^2 \rangle 
&= \langle \phi'(\tx_i)^2 \rangle 
= (\sigma^0_i)^2
\end{align*}
for each neuron $i$. Therefore the moments of the gains are constant with respect to $\vh$.

We conclude that \eqref{eq:S-cvx-intermediate} takes the form of \eqref{eq:S-cvx-simple}, where the constants are $c_{ij}^0 = (\sigma_j^0)^2 |J_{ij}^0|^2 \geq 0$, $p=2$, and $S_\mathrm{const}= - 2 \sum_i \mu_i^0  \tJ_{ii} + N$.
As \eqref{eq:S-cvx-simple} is a positive linear combination, with constant coefficients, of exponentiated linear functions of $\vh$ (which are convex), then it is in turn a convex function of $\vh$ \cite[Ch. 3]{boyd2004convex}.

\end{proof}

\subsection{Other notions of robustness}
Our definition of sensitivity \eqref{eq:sensitivity} is chosen to emphasize the aspect of recurrent networks which is typically most crucial to task performance; that is, the neural dynamics.
However, there are other notions of robustness which are useful to mention.
Here we briefly address how the task-preserving transformation \eqref{eq:task-preserving-transformation}, parameterized by $\vh$, interacts with several additional relevant notions of sensitivity.

First, consider the sensitivity $S^{\vu\rightarrow \vy}$ of the output trajectory $\{\vy(t)\}$ to fluctuations in the input trajectory $\{\vu(t)\}$.
In our analysis, no choice of $\vh$ can affect this quantity, precisely because the task-preserving transformation exactly preserves the input-output map. Put differently, this notion of sensitivity is intrinsic to the function being computed, not to the manner in which the recurrent network computes it.

Second, consider the sensitivity $S^{\vx\rightarrow \vy}$ of the output trajectory $\{ \vy(t) \}$  to fluctuations in the time course of neural activity $\{ \vx(t) \} $, and respectively, the sensitivity  $S^{\vu\rightarrow \vx}$ of neural activity $\{\vx(t)\}$ to fluctuations in the input trajectory $\{ \vu(t) \}$.
These sensitivities can be individually minimized by taking  $\vh \rightarrow -\infty$ to minimize the norm of $\mWout$ and, respectively, $\vh \rightarrow \infty$ to minimize the norm of $\mWin$.
 The resulting tradeoff between $S^{\vx\rightarrow \vy}$ and $S^{\vu\rightarrow \vx}$ is separate from the problem of minimizing our choice of $S$, and it may be solved independently of the method we now present.

\section{A local learning rule maximizes robustness while preserving task performance}
\label{sec:synaptic-balancing-dynamics}

We have introduced the task-preserving manifold as the set of weight configurations accessible by a symmetry transformation which preserves a given deterministic input-output map.
We have also shown that the sensitivity of neural dynamics to small perturbations in activity---the generally intractable quantity $S$---is well behaved when constrained to the task-preserving manifold.
 A simple question emerges: how might networks traverse the task-preserving manifold to maximize robustness while preserving underlying task performance? 
To address this question, we derive gradient descent dynamics that maximize network robustness in the coordinates of the task-preserving manifold.
We find, perhaps surprisingly, that these dynamics are wholly implementable by biologically plausible local computations within neurons and synapses. 

\begin{figure}
\begin{center}
\includegraphics[width=\linewidth]{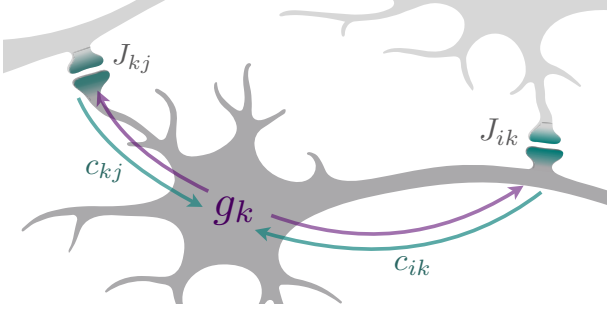}
\caption{
The local computations underlying synaptic balancing.
Each synapse calculates a cost as a function of synaptic strength, as in \eqref{eq:cij-sensitivity}.
Neuron $k$ receives signals of incoming synaptic cost $c_{kj}$ and outgoing synaptic cost $c_{ik}$ and computes the difference $g_{k}$ as in \eqref{eq:gk-costs}. The signal $g_{k}$ then propagates outwards to modify the strength of incoming and outgoing connections, as in \eqref{eq:Jij-dot}, such that the total incoming and total outgoing costs are eventually balanced in every neuron.
}
\label{fig:local-computation}
\end{center}
\end{figure}

From \eqref{eq:S-cvx-simple}, the problem of maximizing robustness $S^{-1}$ is equivalent to that of minimizing the {total cost}:
\begin{align}
    \label{eq:total-cost}
    C
    &=\sum_{ij} c_{ij},
\end{align}
where $c_{ij}$ is the {synaptic cost}
\begin{align}
\label{eq:cij-sensitivity}
c_{ij}
&= \sigma_j^2 |J_{ij}|^p \\
\label{eq:cij-sensitivity-h}
&= c_{ij}^0  e^{p(h_j - h_i)},
\end{align}
with $p=2$, and in the second line we have rewritten the synaptic costs in terms of the initial cost $c^0_{ij} = (\sigma^0_i)^2 |J_{ij}^0|^p$ and the coordinates $\vh$, obtained by writing out \eqref{eq:cij-sensitivity} in terms of the task-preserving transformation $J_{ij} = J_{ij}^0 e^{h_j - h_i}$ \eqref{eq:task-preserving-transformation}.

We call the network connected if the directed graph whose edge weights are given by the initial synaptic cost matrix $\mC^0$ is connected. 
We assume, without loss of generality, that the network is connected (if not, a similar theory applies to each connected component separately). 

We turn to deriving synaptic dynamics that transforms $\vh$ over time so as to maximize robustness on the task-preserving manifold.
Until now we have treated the vector $\vh$ as a free coordinate on the task-preserving manifold.
Henceforth we consider it as a function of time, initializing at the origin and evolving under gradient descent on the total cost.
Similarly, we now view $\mcW = (\mWin, \mJ, \mWout) =  \pi_\vh(\tmcW)$ as a time-varying weight configuration, with initial value $\tmcW$ corresponding to $\vh=0$.

Let the time derivative of $\vh$ be denoted by $\dot \vh = \vg$, which we refer to as the {neural gradient} vector.
The dynamics of gradient descent on $C$ with respect to $\vh$ is given by
\begin{align}
\label{eq:gradient-descent}
\vg = -\gamma \frac{\partial C}{\partial \vh},
\end{align}
where $\gamma \sim \mcO(1)$ is the descent rate.
We assume a separation of timescales, such that neural dynamics \eqref{eq:neural-dynamics} are much faster than the weight dynamics \eqref{eq:gradient-descent}.

The synaptic costs \eqref{eq:cij-sensitivity} obey
\begin{align}
\label{eq:partial-c-partial-h}
\frac{\partial c_{ij}}{\partial h_k}
&= p \, c_{ij} \,(\delta_{kj} - \delta_{ki}),
\end{align}
where $\delta$ is the Kronecker delta. 
In the present regime where $\gamma \sim \mcO(1)$ and  $p \sim \mcO(1)$, we are free to adopt throughout $\gamma p = 1$ for notational convenience. 
Using this, we evaluate the right hand side of \eqref{eq:gradient-descent} via \eqref{eq:total-cost} and \eqref{eq:partial-c-partial-h} to find,
\begin{align}
\label{eq:gk-costs}
g_k
&= \sum_j c_{kj} - \sum_i c_{ik}.
\end{align}
Finally, to see how synaptic strengths are updated, we differentiate the task-preserving transformation $J_{ij} = J^0_{ij}e^{h_j-h_i}$ with respect to time, finding that
\begin{align}
    \label{eq:Jij-dot}
    \dot J_{ij} 
    &= J_{ij}(g_j - g_i), \\
    \label{eq:Win-dot} 
    \dot W^\mathrm{in}_{ik}
    &= - W^\mathrm{in}_{ik} g_i \\
    \label{eq:Wout-dot}
    \dot W^\mathrm{out}_{kj}
    &= W^\mathrm{out}_{kj} g_j 
\end{align}
Equations \eqref{eq:gk-costs} through \eqref{eq:Wout-dot}, along with the definition of synaptic costs \eqref{eq:cij-sensitivity}, are self-contained and collectively comprise the dynamics of our proposed update rule, which we call {synaptic balancing}.

Interestingly, we find that synaptic balancing is entirely implementable by local computations in a network.
First, costs \eqref{eq:cij-sensitivity} are computed at the synapse as a product of the square of the synaptic weight and the presynaptic average gain.
Second, neural gradients \eqref{eq:gk-costs} are computed centrally in the neuron by aggregating and comparing the costs of incoming and outgoing synapses.
Third, synaptic weights \eqref{eq:Jij-dot} are updated in proportion to the difference between the presynaptic and postsynaptic neural gradients.

In this scheme we assume that the synaptic weights, synaptic costs, and neural gradients can be represented as biophysical quantities inside neurons and synapses.  We suppose, as depicted in Fig. \ref{fig:local-computation}, that in one direction a neuron traffics its gradient from the soma to incoming and outgoing synapses, and, in the other, it traffics synaptic costs from synapses to the soma.
Finally, we assume that each synapse is able to measure and store certain statistics of presynaptic activity, namely, the average presynaptic gain.
This assumption is in line with other models of synaptic plasticity which introduce some form of leaky integration of neural activity, e.g. the sliding threshold of the BCM rule \cite[p. 288]{dayan2001theoretical}.

Notably, although the coordinates $\vh$ and the task-preserving transformation $\pi_\vh$ are of instrumental value in deriving and analyzing our rule, the biological procedure just described implements synaptic balancing without an explicit representation of $\vh$.

Next, we study the equilibrium state of synaptic balancing and offer a simple condition under which an equilibrium exists, thereby guaranteeing the stability of our rule.

\begin{figure*}
\makebox[\textwidth][c]{
\includegraphics[width=6in]{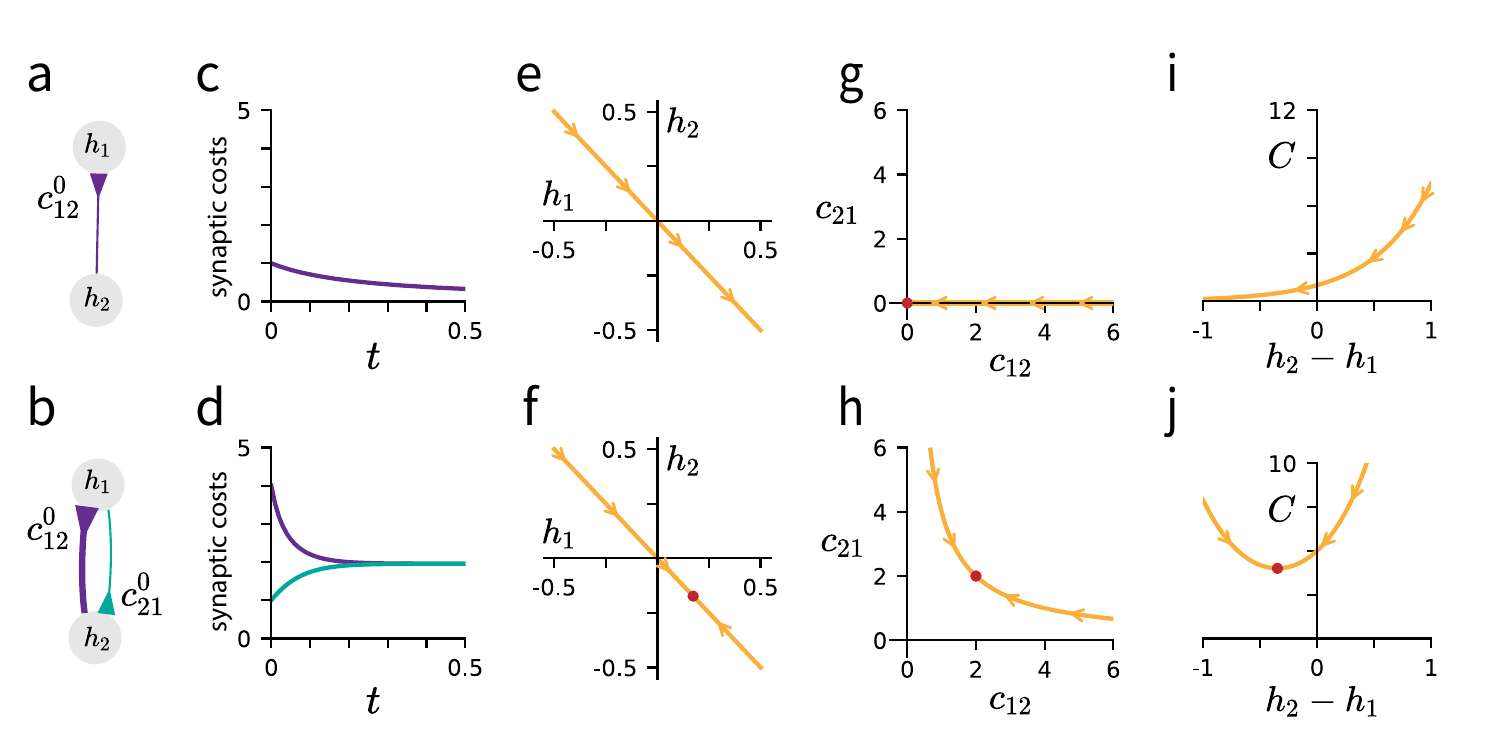}
}
\caption{
Network topology determines the geometry of the task-preserving manifold and the dynamics of synaptic balancing.
Top row: network with two hidden units connected by a single synapse.
Bottom row: network with two hidden units connected reciprocally. 
\textbf{(a-b)} 
Network diagrams showing topology and initial synaptic costs, indicated by line thickness.
Input and output neurons are not shown.
\textbf{(c-d)}
Trajectory of synaptic costs over the course of synaptic balancing.
Line colors match synapse colors in panels (a-b).
Panel (c) matches \eqref{eq:two-neuron-solution-feedforward} and panel (d)  matches \eqref{eq:two-neuron-solution-connected}.
\textbf{(e-f)}
The feasible set of $\vh$ satisfying \eqref{eq:sum-hi-zero},
with flow lines indicating trajectory of synaptic balancing.
Panel (f): Red point indicates the (finite) minimizer of the total cost.
\textbf{(g-h)}
Tradeoff between $c_{12}$ and $c_{21}$ as a function of $h_2-h_1$, with red point indicating the optimal value of total cost $C = c_{12}+c_{21}$.
\textbf{(i-j)}
Total cost $C$ as a function of $h_2-h_1$.
(i) Optimal cost is zero, attained at an infinite value of $\vh$.
(j) Optimal cost is positive, attained at finite $\vh$.
}
\label{fig:task-manifold-geometry}
\end{figure*}

\section{Strongly connected networks attain balanced equilibrium}
\label{sec:equilibrium-existence-stability}

Synaptic balancing asymptotically converges to a global minimum on the task-preserving manifold, by construction, because it is gradient descent on the convex function $C$.
However, a global minimum may not exist at any finite value of the task-preserving manifold coordinates $\vh$.
For example, the convex function $f(h) = e^h$ on the real line possesses no global minimum at finite $h$, and the asymptotic convergence of gradient descent is not assured.
It is therefore important to establish the criteria under which we expect synaptic balancing to stably converge to a (finite) minimizer $\vh^*$.
To do this, we provide two simple criteria:
first, a balance condition which describes the set of equilibria of synaptic balancing, and second, a strongly-connected condition which implies the existence of an equilibrium on the task-preserving manifold that attains exponential convergence.

A weight configuration globally minimizes the total cost on the task-preserving manifold if and only if it satisfies the balance condition
\label{prop:balance-condition}
\begin{align}
    \label{eq:balance-condition}
    \sum_i c_{ik} = \sum_j c_{kj}
\end{align}
for each neuron $k$.
To see this, observe that because the total cost $C$ is convex and differentiable as a function of $\vh$, a coordinate $\vh^*$ achieves a global minimum if and only if $\frac{\partial C}{\partial \vh}(\vh^*)=0$, which, from \eqref{eq:gradient-descent}, holds if and only if $\vg = 0$.
Solving for $\vg=0$ in \eqref{eq:gk-costs} yields the {balance condition} \eqref{eq:balance-condition}.

It is because of \eqref{eq:balance-condition}
that we use the term {balancing} to describe our rule:
the total cost is minimized, and a weight configuration is a stable fixed point of the weight dynamics \eqref{eq:Jij-dot}, if and only if the total synaptic costs of each neuron's incoming and outgoing synapses are equal.
It remains unclear, however, whether the balance condition \eqref{eq:balance-condition} is attainable in every case, or put differently, whether synaptic balancing is stable from every initial condition.

We now state a simple topological criterion which (we will show) implies the stability of synaptic balancing.
A recurrent network is said to be strongly connected if between every pair of neurons $i$ and $j$ there exists a path of synapses, each with positive synaptic cost, from $i$ to $j$.
Conceptually, this condition simply means that the recurrent connectivity does not possess any embedded directed acyclic structure, such as purely feedforward connections.

With regard to biology, it is of course generally challenging to show that any given biological network is strongly connected.
With that said, we believe that the highly recurrent nature of the central nervous system suggests that this assumption may not be far from reality in certain cortical regions.
Long-range feedback projections as well as local recurrence in cortex suggest that non-negligible sub-networks of neurons participating in, for example, the ventral visual stream are indeed strongly connected \cite{kravitz2013ventral, nayebi2021goal}.
Such networks are strong candidates for synaptic balancing.

Mathematically, it is known that the connected components of any balanced matrix, i.e., a matrix satisfying \eqref{eq:balance-condition}, are strongly connected \cite{hooi1970class}, and that, conversely, such matrices may be made balanced through a positive diagonal similarity transformation \cite{eaves1985line}.  

 Before formally describing the stability of our rule, we make the preliminary observation that the synaptic costs in \eqref{eq:cij-sensitivity-h} are invariant under the transformation $\vh \mapsto \vh + b$, for $b \in \reals$.
Therefore the gradient of the total cost $C$ is perpendicular to the all ones vector $\vone$, and gradient descent on $C$ does not explore that direction:
From \eqref{eq:gk-costs}, $\vone^T \dot \vh = \sum_i g_i = 0$.
Assuming as before that $\vh^0=0$, synaptic balancing exclusively yields solutions that satisfy
\begin{align}
\label{eq:sum-hi-zero}
    \vone^T \vh = \vone^T \vh^0 = 0.
\end{align} 

We may now connect these remarks to our weight dynamics, showing that synaptic balancing converges exponentially to a balanced configuration whenever the initial cost matrix is strongly connected.
 Fig. \ref{fig:task-manifold-geometry} illustrates this point in two-neuron networks, showing the time course of synaptic balancing and the geometry of the cost function. In the first case of a single feedforward connection, a stable equilibrium is not reached at any finite $\vh$. In the second case of recurrent connections but asymmetric initial synaptic costs, the network converges to a stable equilibrium at finite $\vh$ and positive total cost $C$, with symmetric final costs.   
A short proof is provided in the following proposition.

\begin{proposition}
\label{prop:topology-convergence}
In a strongly connected network, synaptic balancing converges to a globally exponentially stable equilibrium, which is the unique solution to \eqref{eq:sum-hi-zero} and \eqref{eq:balance-condition} on the task-preserving manifold.
\end{proposition}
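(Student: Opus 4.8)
The plan is to recognize the cost function $C(\vh) = \sum_{ij} c_{ij}^0 e^{2(h_j - h_i)} + S_\mathrm{const}$ as a sum of exponentials of a linear map of $\vh$, and to show that strong connectivity of the cost matrix forces this function to be coercive on the hyperplane $\{\vone^T\vh = 0\}$, which by convexity gives a unique finite minimizer; exponential convergence then follows from strong convexity of $C$ restricted to that hyperplane near the minimizer, or more carefully from a Polyak--Łojasiewicz-type inequality that holds globally.

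First I would set up coordinates: restrict attention to the hyperplane $V = \{\vh : \vone^T\vh = 0\}$, which by \eqref{eq:sum-hi-zero} is invariant under the flow and contains $\vh^0 = 0$. On $V$, I claim $C$ is coercive, i.e., $C(\vh) \to \infty$ as $\|\vh\| \to \infty$ within $V$. Here is where strong connectivity enters: if $\|\vh\|\to\infty$ on $V$, then since the entries of $\vh$ sum to zero, the maximum entry $h_{\max}$ and minimum entry $h_{\min}$ satisfy $h_{\max} - h_{\min} \to \infty$. Strong connectivity means there is a directed path of positive-cost edges from $\arg\min_k h_k$ to $\arg\max_k h_k$; along that path, some consecutive edge $(i,j)$ with $c_{ij}^0 > 0$ must have $h_j - h_i$ bounded below by $(h_{\max}-h_{\min})/N \to \infty$, so the term $c_{ij}^0 e^{2(h_j-h_i)}$ blows up, and since all terms are nonnegative, $C \to \infty$. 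Coercivity plus convexity plus differentiability of $C$ on the finite-dimensional space $V$ gives existence of a global minimizer $\vh^*$ in $V$; uniqueness on $V$ needs the strict convexity argument below. Combined with the balance-condition characterization \eqref{eq:balance-condition} (already established as equivalent to $\vg = 0$, i.e., to being a critical point), this $\vh^*$ is the unique point of $V$ satisfying \eqref{eq:balance-condition}.

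For uniqueness and for exponential convergence I would examine the Hessian of $C$ on $V$. Differentiating \eqref{eq:gk-costs} and using \eqref{eq:partial-c-partial-h}, one finds $\partial^2 C / \partial h_k \partial h_\ell$ is (up to the factor $p=2$) a weighted graph Laplacian: off-diagonal entries $-2(c_{k\ell} + c_{\ell k})$ and diagonal entries making rows sum appropriately — concretely $\nabla^2 C = 2L_{\mathrm{sym}}$ where $L_{\mathrm{sym}}$ is the Laplacian of the symmetrized cost graph with edge weights $c_{ij} + c_{ji}$. Strong connectivity of the cost matrix implies the symmetrized graph is connected, so this Laplacian has a one-dimensional kernel spanned exactly by $\vone$ and is positive definite on $V$; hence $C$ is strictly convex on $V$ (giving uniqueness of $\vh^*$) and, since the cost weights $c_{ij}(\vh)$ stay bounded away from $0$ and $\infty$ on the compact sublevel set $\{C \le C(\vh^0)\} \cap V$ (they're bounded above because $C$ is, and bounded below on this compact set since each $c_{ij}$ with $c_{ij}^0>0$ is a positive continuous function), there is a uniform $\lambda_{\min} > 0$ lower-bounding the Laplacian's smallest nonzero eigenvalue over this invariant sublevel set. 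Gradient flow $\dot\vh = -\nabla C$ with $\mu$-strongly-convex $C$ (strong convexity constant $\mu = 2\lambda_{\min}$ on the relevant region) satisfies $\frac{d}{dt}(C(\vh) - C(\vh^*)) = -\|\nabla C\|^2 \le -2\mu (C(\vh) - C(\vh^*))$, so $C(\vh(t)) - C(\vh^*)$ decays exponentially, and strong convexity converts this into exponential decay of $\|\vh(t) - \vh^*\|$; since $\vh$ determines $\mcW$ continuously via $\pi_\vh$, the weight configuration converges exponentially too.

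The main obstacle I anticipate is the interplay between the two roles strong connectivity must play simultaneously: it must give coercivity of $C$ on $V$ (so a finite minimizer exists at all), and it must give a \emph{uniform} strong-convexity constant along the trajectory (so the convergence is exponential rather than merely asymptotic). The delicate point is that the Hessian $2L_{\mathrm{sym}}(\vh)$ degenerates as $\vh \to \infty$ because the $c_{ij}(\vh)$ can shrink to zero on some edges; one must confine the argument to the forward-invariant compact sublevel set $\{C \le C(0)\}\cap V$ and argue that the cost weights on the edges of a strongly connected spanning subgraph stay bounded below there — this requires care because an individual edge cost $c_{ij}$ could in principle vanish in the limit, so one needs that enough edges (a strongly connected subset) keep positive cost, which follows from the sublevel-set bound but must be stated cleanly. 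A secondary technical point is handling the non-recurrent weights $\mWin, \mWout$, whose entries may run off to $0$ or $\infty$ even though $\vh^*$ is finite — but since $\vh^*$ itself is finite, $e^{\pm\mH^*}$ is a finite matrix and these weights converge to finite limits, so this causes no real difficulty.
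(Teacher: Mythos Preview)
Your proposal is correct and follows essentially the same route as the paper. Both arguments (i) use strong connectivity to show that the sublevel set $\{C\le C^0\}\cap\{\vone^T\vh=0\}$ is compact (the paper does this by telescoping the explicit bounds $e^{h_j-h_i}\le (C^0/c_{ij}^0)^{1/p}$ along directed paths, while you phrase the same idea as coercivity via a pigeonhole step along a path), and (ii) identify the Hessian of $C$ as $p^2$ times the Laplacian of the symmetrized cost graph, note it is positive definite on $\vone^\perp$ by connectedness, and extract a uniform strong-convexity constant by continuity of eigenvalues on the compact sublevel set. One small slip: in your coercivity step the path should run from $\arg\max_k h_k$ to $\arg\min_k h_k$ (so that the edge $(i,j)$ with $c_{ij}^0>0$ has $h_j-h_i$ large and positive), but strong connectivity supplies paths in both directions, so this is cosmetic.
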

\begin{proof}[Proof (Sketch)] 
Because synaptic balancing does not increase the total cost $C = \sum_{ij} c_{ij}$ \eqref{eq:total-cost}, and because the synaptic costs are nonnegative, we have that $c_{ij} \leq C^0$ along the trajectory of synaptic balancing for all $i,j$.
If $c^0_{ij} >0$, then since $c_{ij} = c_{ij}^0 e^{p(h_j - h_i)}$ \eqref{eq:cij-sensitivity-h}, it follows directly that $e^{h_j - h_i} \leq (C^0/c_{ij}^0)^{1/p} < \infty$.
If $c^0_{ij} = 0$, then an ``indirect'' upper bound for $e^{h_j - h_i}$ is obtained by collapsing the direct upper bounds along a path of positive synaptic costs from $j$ to $i$:
In the case of a single intervening neuron $k$, for example, $e^{h_j - h_i} = e^{h_j - h_k} e^{h_k -h_i} \leq (C^0/c_{kj}^0)^{1/p} (C^0/c_{ik}^0)^{1/p}$. 
By the strong connectedness assumption, such a path exists between every $i$ and $j$, so we may in this way obtain upper bounds on $ h_j - h_i$, and therefore on $|h_j - h_i|$, for all $i,j$.
Combined with the constraint $\sum_i h_i = 0$ \eqref{eq:sum-hi-zero}, it follows that $\vh$ is contained in a compact set over the trajectory of synaptic balancing. 
We conclude that the minimizer of $C$, a convex function of $\vh$, is attained in this set.
To show that the minimizer is unique and globally exponentially stable, it is sufficient to show that under strong connectedness, $C$ is strongly convex on the sublevel set of $C^0$.  This is shown in   \S \ref{sec:equilibrium-existence-condition-appendix} with a more detailed proof. 
\end{proof}

In summary, synaptic balancing consists of local synaptic updates which are stable in any recurrent network that does not possess directed acyclic structure.
By minimizing a convex cost on the task-preserving manifold, the rule maximizes the robustness of neural dynamics to noise while maintaining underlying task performance.
In the following section we will mention a few useful generalizations of our model.

\section{Generalizations and a connection to Lax dynamical systems}
\label{sec:generalizations-lax}

We have derived synaptic balancing as gradient descent on a behaviorally-relevant convex function---the sensitivity of the network to neural noise---and characterized the stability and equilibria of the rule.
We now observe that our framework for synaptic balancing admits several generalizations: first, a broader class of synaptic costs, of which sensitivity is a particularly salient example, and second, a yet more general class of matrix-valued dynamical systems known as Lax dynamics, which have found widespread applications in fields from physics \cite{lax1968integrals} to optimization and numerical linear algebra, where they are known as isospectral flows \cite{helmke1994optimization, chu2008linear}. 

\subsection{A general class of well-behaved synaptic costs}

The expression for synaptic costs \eqref{eq:cij-sensitivity} was derived to maximize robustness by minimizing the behaviorally relevant sensitivity function \eqref{eq:sensitivity}.
Generally, however, the framework presented here admits a broad class of synaptic cost functions.
If the total cost is the sum of synaptic costs, as in \eqref{eq:total-cost},
then the synaptic cost $c_{ij}$ may be an arbitrary fixed function of $J_{ij}$.
Importantly, the stability of the resulting weight dynamics, and the optimality of any equilibria, are not guaranteed in general.

A sensible class of synaptic costs which are convex in $\vh$ are the {power-law costs}
\begin{align}
    \label{eq:cij-power-law}
    c_{ij} = \alpha_{ij} |J_{ij}|^p,
\end{align}
where $\alpha_{ij} \geq 0$ for all $i,j$, and $p>0$. This definition includes sensitivity \eqref{eq:cij-sensitivity} as a special case, with $\alpha_{ij} = \sigma^2_{j}$ and $p=2$.
Other costs of the power-law form include the $\ell_1$ and $\ell_2$ matrix penalties studied in statistics and machine learning
\cite[Ch.\ 3.4]{hastie2009elements}.
Notably, the power-law costs \eqref{eq:cij-power-law}:
\emph{i)} obey \eqref{eq:cij-sensitivity-h}, \emph{ii)} correspond to neural gradients of the form \eqref{eq:gk-costs}, and \emph{iii)} inherit the stability result of Prop. \ref{prop:topology-convergence},
 extending the findings of this work to a broader class of cost function.
Robustness, specifically, is only maximized with the particular choice of costs \eqref{eq:cij-sensitivity}.
 
\subsection{Synaptic Lax dynamics}
A yet further generalization of our synaptic update framework dispenses with a cost function entirely.
A {Lax dynamical system} is an evolution on an $N \times N$ matrix $\mA$ such that 
\begin{align}
\label{eq:lax-dynamical-system-defn}
\dot \mA
=[\mA, \, k(\mA)];
\quad
\mA(0)
=\mA_0,
\end{align}
where $k: \reals^{N \times N} \longrightarrow \reals^{N \times N}$ is a matrix-valued function which depends on time only through its argument $\mA$, and $[\cdot,\, \cdot]$ denotes the Lie bracket, which is defined as
\begin{align*}
    [\mA, \, \mB]  &=\mA \mB - \mB \mA.
\end{align*}
An important property of Lax dynamics is that the evolution of $\mA$ takes the form of a time-varying similarity transformation.
To see this, consider the $N \times N$ matrix $\mK$ which evolves in time according to 
\begin{align*}
    \dot \mK = \mK k(\mA); \quad \mK(0) = \mI.
\end{align*}
It is straightforward to show that $\mK(t)$ is nonsingular for all $t$ and that the similarity transformation
\begin{align*}
    \mA(t) = \mK(t)^{-1} \mA_0 \mK(t)
\end{align*}
is the unique solution of \eqref{eq:lax-dynamical-system-defn}.
As a consequence, Lax dynamical systems conserve the entire spectrum of the matrix $\mA$ and are sometimes referred to as {isospectral flows}.

Synaptic balancing is a specific instance of Lax dynamics.
We may rewrite \eqref{eq:Jij-dot} as 
\begin{align}
    \label{eq:J-dot}
    \dot \mJ = [\mJ, \, \diag\{\vg\}], %
\end{align}
noting that the elements of $\vg$ depend on time only through a dependence on the elements of $\mJ$. Thus this dynamics is a special case of the general definition \eqref{eq:lax-dynamical-system-defn}.

The Lax dynamics of synaptic balancing \eqref{eq:J-dot} encompass a very general form of task-preserving local learning rule.
If the neural gradient $g_k$ is allowed to depend arbitrarily on the incoming and outgoing weights of neuron $k$, then synaptic Lax dynamics remains confined to the task-preserving manifold and is fully locally computable in the sense of Fig. \ref{fig:local-computation}.
Further, all quantities which are conserved on the task-preserving manifold---including the spectrum of the recurrent weight matrix and the product of synaptic weights along every directed closed loop---are conserved by the synaptic Lax dynamics as well.

Synaptic Lax dynamics, if chosen in such a way to be stable, might be used to regulate any number of structural properties of the network.
Any equilibrium of the synaptic Lax dynamics satisfies the equality condition 
$g_i = g_j$
for all pairs $(i,j)$, which generalizes the balance condition \eqref{eq:balance-condition}.
For choices of neural gradient for which this equilibrium is attainable and stable, the synaptic Lax dynamics offer a flexible framework for adapting network weights without affecting task performance.
For example, \cite{schneider1991max} studies the problem of stably balancing the {maximum} incoming and outgoing synaptic strengths, and
\cite{rothblum1992scalings} studies the problem of balancing the incoming and outgoing {products} of synapses.
Each of these aims could be implemented by the synaptic Lax dynamics through proper choice of $g_k$.
Our work unifies these efforts under a general framework of continuous-time weight dynamics \eqref{eq:J-dot}, and, unlike previous approaches,
derives from functional considerations a particular form of neural gradient
that provably minimizes a behaviorally-relevant cost function on the task-preserving manifold.

\section{Regularized networks are nearly balanced}
\label{sec:regularized-networks-balanced}

\begin{figure*}
\makebox[\textwidth][c]{
\includegraphics[width=6in]{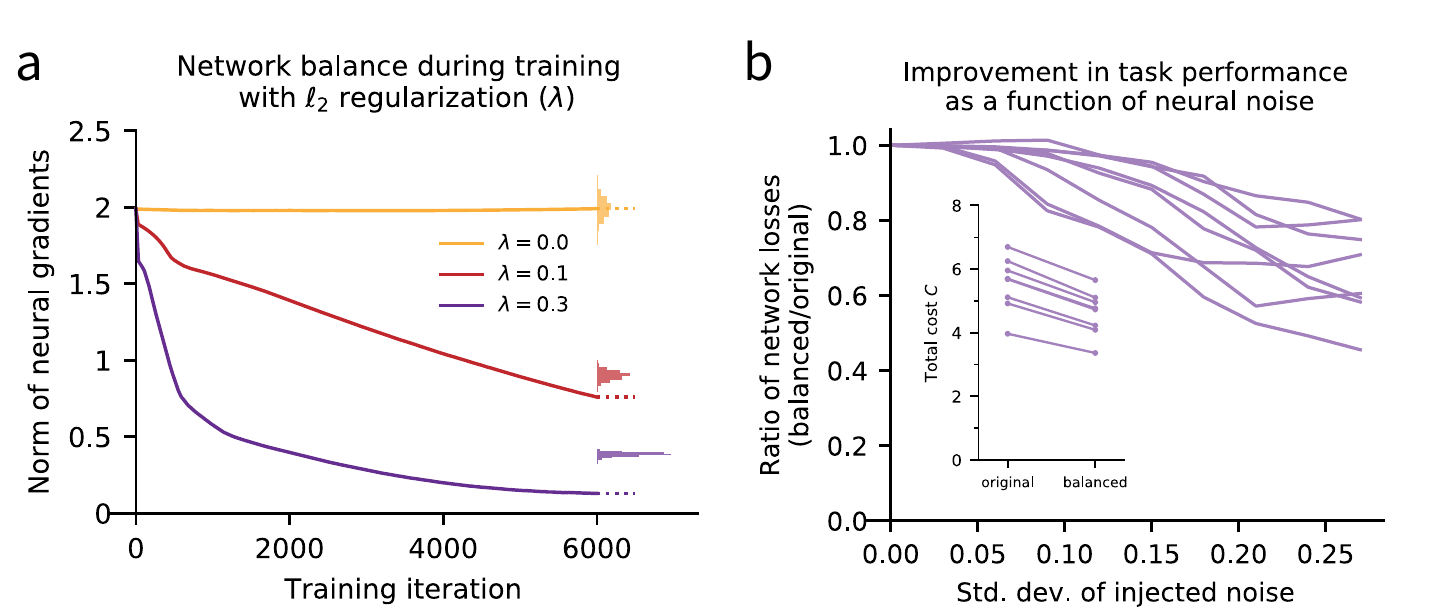}
}
\caption{
The balance condition in trained networks.    
\textbf{(a)} 
Networks with $N=256$ neurons are trained via gradient descent on a context-dependent integration (CDI) task modeled after \cite{mante2013context}, with varying levels of $\ell_2$ penalty $\lambda \sum_{ij} J_{ij}^2$.
The norm of neural gradients, $\|\vg \|$, is shown over the course of training.
When $\lambda = 0$, neural gradients are fixed by gradient descent dynamics.
When $\lambda > 0$, trained solutions tend towards the balance condition \eqref{eq:balance-condition}, with $c_{ij}=J_{ij}^2$.
Histograms at right denote the empirical null distribution of $\|\vg \|$ under permutation of the rows of $\mC$ at the final training iteration. For positive values of $\lambda$, the actual value of $\|\vg \|$ (dotted line) falls significantly below the null distribution. 
\textbf{(b)}
Synaptic balancing with the robustness cost function \eqref{eq:cij-sensitivity} is applied to several networks trained with $\lambda=0.3$ (corresponding to the purple curve in panel \textbf{a}).
The original and balanced networks are run on the CDI task at varying levels $\varepsilon$ of Gaussian noise injected into hidden dynamics.
For each network pair the ratio is plotted of task loss of the balanced network to that of the original network, as a function of $\varepsilon$.
As dynamics become more noisy, the performance of the original networks (as measured by loss on the task) degrades faster than that of the corresponding balanced networks. Inset: total cost $C$ of the original vs. balanced networks \eqref{eq:total-cost}.\text
}
\label{fig:trained-networks}
\end{figure*}

Our focus so far has been on establishing theoretical links between the balance condition, noise-robust computation and the proposed local plasticity rule.
We now turn to studying the interaction of these concepts with other forms of plasticity, such as task-relevant learning.
In this section we present evidence that commonly studied classes of network are, in fact, generically in the vicinity of equilibrium, and that the balance condition \eqref{eq:balance-condition}, far from being an obscure edge case of network configurations, is in some sense a generic state.
In particular, we find that the balance condition is approximately attained by any learning algorithm minimizing a very general class of regularized loss functions.

Recall that we denote by $\mcF(\mcW)$ the input-output map of the network \eqref{eq:input-output-map}, which takes input trajectories $\{\vu(t)\}$ to output trajectories $\{\vy(t)\}$ and which is parameterized by the weight configuration $\mcW = (\mWin, \mJ, \mWout)$. 
Consider an arbitrary loss function $L$ on the input-output map $\mcF$.
We are not concerned with the particular functional form of $L$---just that it depends on the weight configuration only through the input-output map.
For example, the loss might measure the performance of the network on some task, with weight configurations that attain lower loss achieving better task performance.

We consider a regularized loss function $\Lreg$ which is the sum of the loss $L$ and element-wise regularization of the recurrent weight matrix:
\begin{align}
    \label{eq:regularized-loss}
    \Lreg(\mcW) = L\left(\mcF\left(\mcW\right)\right) +  \sum_{ij} \alpha_{ij} |J_{ij}|^p, %
\end{align}
where $p, \alpha_{ij} > 0$ for all $i,j$.
The expression for $\Lreg$ encompasses both $\ell_1$ and $\ell_2$ regularization of the recurrent weight matrix, for example, by setting $p=1$ and $p=2$ respectively.

\begin{proposition}
\label{prop:regularized-universality}
Suppose $\Lreg$ has a local minimum at the weight configuration $\mcW^*$. Then $\mcW^*$ satisfies the balance condition \eqref{eq:balance-condition}, with synaptic costs $c_{ij} = \alpha_{ij} |J_{ij}|^p$.
\end{proposition}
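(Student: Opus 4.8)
The plan is to exploit the hypothesis that $L$ depends on the weights only through the input-output map $\mcF$, combined with the fact that $\mcF$ is invariant under the task-preserving transformation $\pi_\vh$ (Prop.~\ref{prop:task-preserving-transformation}). Concretely, I would place coordinates $\vh$ on the task-preserving manifold through $\mcW^*$, so that $\vh=0$ corresponds to $\mcW^* = (\mWin,\mJ,\mWout)$ itself and the transported configuration $\pi_\vh(\mcW^*)$ has recurrent entries $J_{ij}^* e^{h_j - h_i}$. The whole proof then amounts to showing that $\mcW^*$ being a local minimum of $\Lreg$ forces stationarity of the regularizer along this manifold, which is exactly the balance condition.

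First I would restrict $\Lreg$ to this one-parameter-family-of-directions. Since $\mcF(\pi_\vh(\mcW^*)) = \mcF(\mcW^*)$ for every $\vh$ by \eqref{eq:task-preserving-transformation-input-output-invariance}, the data-fitting term $L(\mcF(\cdot))$ is constant in $\vh$, so
\begin{align*}
\Lreg(\pi_\vh(\mcW^*)) = L(\mcF(\mcW^*)) + \sum_{ij} \alpha_{ij} |J_{ij}^*|^p e^{p(h_j - h_i)} = \mathrm{const} + C(\vh),
\end{align*}
where $C(\vh) = \sum_{ij} c_{ij}^0 e^{p(h_j - h_i)}$ with $c_{ij}^0 = \alpha_{ij} |J_{ij}^*|^p \ge 0$ is precisely the total power-law cost written in the coordinates $\vh$, as in \eqref{eq:cij-sensitivity-h}. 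Because $\mcW^*$ is a local minimum of $\Lreg$ over the full weight space, it is in particular a local minimum of this restriction, so $\vh = 0$ is a local minimizer of $C$.

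Next I would invoke convexity, exactly as in the proof of Prop.~\ref{prop:sensitivity-convex}: $C$ is a nonnegative linear combination of exponentials of linear functions of $\vh$, hence convex and smooth. A local minimum of a differentiable convex function is a global minimum and satisfies the first-order condition $\nabla_\vh C(0) = 0$. Differentiating term by term with \eqref{eq:partial-c-partial-h} gives, at $\vh = 0$,
\begin{align*}
\frac{\partial C}{\partial h_k}\bigg|_{\vh=0} = p\left(\sum_j c_{kj}^0 - \sum_i c_{ik}^0\right) = 0 \quad \text{for every } k,
\end{align*}
which, identifying the synaptic costs at $\mcW^*$ as $c_{ij} = c_{ij}^0 = \alpha_{ij}|J_{ij}^*|^p$, is exactly the balance condition \eqref{eq:balance-condition}.

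I do not expect a serious obstacle. The one point deserving care is the (routine) observation that it suffices to test the minimality of $\mcW^*$ only along directions tangent to the task-preserving manifold — i.e., that a full-space local minimum restricts to a local minimum of any smooth curve through the point — together with the remark that along this manifold the restricted objective differs from the pure regularizer $C(\vh)$ only by an additive constant. A secondary subtlety is the case $p \le 1$, where $t \mapsto |t|^p$ is not differentiable at the origin; this is harmless because $\pi_\vh$ only rescales nonzero weights and fixes zero ones, so $C(\vh)$ never probes the nondifferentiable locus and remains smooth in $\vh$.
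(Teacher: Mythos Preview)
Your argument is correct and is essentially the same as the paper's: both restrict $\Lreg$ to the task-preserving manifold through $\mcW^*$, observe that the data term $L$ is constant there, and conclude that local minimality of $\Lreg$ forces stationarity of the regularizer $C(\vh)$ at $\vh=0$, which is the balance condition. The only cosmetic difference is that the paper phrases this as a contradiction via the gradient-descent flow (if $\vg\neq 0$ then $C$, and hence $\Lreg$, strictly decreases along the trajectory), whereas you state the first-order condition $\nabla_\vh C(0)=0$ directly; your remark that convexity is not actually needed for this step is also correct.
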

\begin{proof}
We provide a proof by contradiction. Assume that there exists a weight configuration $\tmcW$ which is a local minimum of $\Lreg$ but which is not an equilibrium of synaptic balancing.
Let $\vh(t)$, $t\geq0$, describe the trajectory of synaptic balancing initialized at $\tmcW$, with synaptic costs $c_{ij} = \alpha_{ij} |J_{ij}|^p$.
Because synaptic balancing preserves the input-output map $\mcF$,
the loss function $L(\mcF(\pi_{\vh(t)}(\tmcW)))$ is constant as a function of $t$, i.e.,
\begin{align}
    \label{eq:loss-constant-under-synaptic-balancing}
   \frac{d}{dt}  L(\mcF(\pi_{\vh(t)}(\tmcW))) = 0.
\end{align}
Because we have assumed that $\mcW^0$ is not an equilibrium of synaptic balancing, the total cost $C = \sum_{ij}  \alpha_{ij} |J_{ij}|^p$ is strictly decreasing at $t=0$, i.e.,
\begin{align}
    \label{eq:cost-decreasing-under-synaptic-balancing}
    \frac{d}{dt} C(\pi_{\vh(t)}(\tmcW)) < 0.
\end{align}
Combining \eqref{eq:loss-constant-under-synaptic-balancing} and \eqref{eq:cost-decreasing-under-synaptic-balancing}, we have that
\begin{align*}
\begin{split}
    \frac{d}{dt} \Lreg(\pi_{\vh(t)}(\tmcW)) 
    =&\frac{d}{dt} L(\mcF(\pi_{\vh(t)}(\tmcW))) \\ & +\frac{d}{dt}C(\pi_{\vh(t)}(\tmcW)) \\
    <& 0
    \end{split}
\end{align*}
for all $t$ and in particular, $t=0$.
Thus, $\Lreg$ is a decreasing function on the trajectory of synaptic balancing initialized at $\tmcW$, and $\tmcW$ is not a local minimum of $\Lreg$.
We conclude that every local minimum of $\Lreg$ satisfies the balance condition \eqref{eq:balance-condition}.
\end{proof}
An inverse result---that in unregularized networks trained with gradient descent, the initial (generally nonzero) neural gradients are exactly conserved by training---has been noted in, e.g., \cite{tanaka2020pruning, du2018algorithmic}.

In practice, numerical optimization algorithms for network training terminate before achieving true local minima, so we do not expect every network trained with regularization to exactly exhibit the balance condition \eqref{eq:balance-condition}.
In Fig. \ref{fig:trained-networks}a we show, however, that trained, $\ell_2$-regularized networks exhibit neural gradients that are substantially lower than would be attained by chance, under random perturbation of rows of the weight matrix.

In understanding these results, the reader should keep in mind that under our terminology, a network may be balanced with respect to one cost function (say, the $\ell_2$ cost), but not balanced with respect to another cost function (say, the robustness cost).
Similarly, the form of regularization that is used during training affects the balance properties of the local minimum reached; a network trained with $\ell_2$ regularization need not satisfy the robustness balance condition.

Nonetheless, we have showed that any regularized network will become exactly balanced (in some sense) through training.
This observation suggests that attaining the equilibrium condition of synaptic balancing is not only functionally desirable from a robustness standpoint but also is a generic consequence of learning rules which optimize a loss function of the form \eqref{eq:regularized-loss}.

\section{Balancing empirically improves task performance}
\label{sec:trained-network-improvement}

To experimentally measure the effect of synaptic balancing on task performance, we trained networks via gradient descent on a context-dependent integration task modeled after \cite{mante2013context}.
This process yielded a trained network which adequately performed the task.
We applied synaptic balancing to the trained network, using the robustness cost function \eqref{eq:cij-sensitivity}, and stored the equilibrium weight configuration as our balanced network.
We simulated the trajectories of the original (trained) versus balanced network and observed task performance while varying the levels of additive Gaussian noise in the neural dynamics \eqref{eq:neural-dynamics}.
Details of the task and network training are provided in \S\ref{sec:network-training-details}.

As expected, both the original and balanced networks perform identically in the absence of noise, due to the task-preserving nature of synaptic balancing.
We also found that the task performance, as measured by the trial-averaged task loss on held-out test data, deteriorated in both the trained and balanced networks as the noise level increased.
However, this deterioration was noticeably attenuated in the balanced networks, and
trained networks that had not been balanced proved more sensitive to higher levels of Gaussian noise.
The decay in relative performance of original versus balanced networks is illustrated across several network instantiations in Fig. \ref{fig:trained-networks}b.

The improvement exhibited by the balanced network is remarkable in part because synaptic balancing does not make use of a task-specific error signal, but merely uses summary statistics of the average neuronal gain during the task.
These results confirm that in networks which are already performing a task, the variability of neural responses can be suppressed simply by shifting synaptic weights towards a balanced configuration via the task-preserving transformation.  As noted above, this task preserving transformation can be implemented by local synaptic learning rules that require no knowledge of the task. 

\section{Exact and approximate trajectories of synaptic balancing}
\label{sec:exact-approx-solutions}

We have shown that the ordinary differential equation \eqref{eq:Jij-dot} is a member of a widely studied class of dynamical systems called Lax dynamics.
In the interest of better understanding the action of our dynamics on the recurrent weight matrix as a whole, we now turn to closed-form solutions to \eqref{eq:Jij-dot} that specify the evolution of synaptic balancing over time.
When the network has just two neurons, our solution is exact;
in general, we derive a quadratic approximation taking the form of a heat equation.

\subsection{Exact trajectory with two neurons}

In a two-neuron network, the balancing dynamics admit an exact analytical solution for the time course of the synaptic costs $c_{12}$ and $c_{21}$, when these costs take the power-law form \eqref{eq:cij-power-law}. 
If both initial synaptic costs are positive, we find that they evolve as
\begin{align}
\label{eq:two-neuron-solution-connected}
\begin{split}
c_{12}(t)
&=  \hat{c}_{12} \, q(t)\\
c_{21}(t)
&=  \hat{c}_{12} \, q(t)^{-1},
\end{split}
\end{align}
where
\begin{align}
\label{eq:cij-geometric-mean}
\hat c_{ij}&=\sqrt{c_{ij}^0 c_{ji}^0}
\end{align}
and
\begin{align*}
    q(t) =  \tanh{\left[
    2p^{2} (\hat{c}_{12}) t 
    + \tanh^{-1} ({c_{12}^0 / c_{21}^0})^{1/2} \right]}
\end{align*}
If just a single initial synaptic cost is positive---suppose it is $c_{12}$--- then it evolves as
\begin{equation}
\label{eq:two-neuron-solution-feedforward}
c_{12}(t) = \frac{c_{12}^{0}}{2 c_{12}^{0} \gamma p^{2}\, t  + 1}
\end{equation}
while $c_{21}$  is fixed at zero.
In agreement with Prop. \ref{prop:topology-convergence}, the latter case suffers from unbounded growth of the input and output weights over time as $J_{12} \rightarrow 0$, and there is no stable equilibrium of the dynamics.
The trajectories in \eqref{eq:two-neuron-solution-connected} and \eqref{eq:two-neuron-solution-feedforward} are illustrated in the final panel of Fig. \ref{fig:task-manifold-geometry}.

\subsection{Approximate trajectory via the heat equation}
Except in the two-neuron scenario, we are not aware of a general analytic solution to the trajectory of synaptic balancing.
Here we show that a closed-form approximate solution is attainable in general, however, shedding light on the macroscopic patterns of weight modification that synaptic balancing induces in a network.
Our approach is to consider the time evolution of the neural gradients, which we tie to solutions of the heat equation on a graph.
This then yields a closed-form approximate local solution for the trajectory of the coordinates $\vh$.

The time evolution of the neural gradients \eqref{eq:gk-costs} is given by
\begin{align}
    \label{eq:g-dot-jacobian}
    \dot \vg
    &= \frac{\partial \vg}{\partial \vh} \vg.
\end{align}
Since $\vh$ evolves under gradient descent \eqref{eq:gradient-descent}, the Jacobian of the dynamics of $\vh$ is a sign-flipped version of the Hessian of the cost function \eqref{eq:total-cost},
\begin{align}
    \label{eq:h-jacobian-hessian}
    \frac{\partial \vg}{\partial \vh}
    &= -\gamma \frac{\partial^2 C}{\partial \vh^2}.
\end{align}
A short calculation via \eqref{eq:partial-c-partial-h} finds the Hessian to be
\begin{align}
    \label{eq:C-hessian-laplacian}
    \frac{\partial^2 C}{\partial \vh^2}
    &= p^2 \mL, 
\end{align}
where $\mL$ takes the form of a Laplacian matrix corresponding to a graph with edge weights given by what we call the conductance matrix $\bmC$, which has $ij$th element
\begin{align}
    \label{eq:cij-bar-defn}
    \bar c_{ij} = c_{ij} + c_{ji},
\end{align}
and which is evidently symmetric.
Explicitly, the Laplacian $\mL$ has elements drawn from $\bmC$ as follows:
\begin{align}
    \label{eq:L-graph-Laplacian}
    L_{ij}
    &=
    \left\{
    \begin{array}{ll}
         -\bar{c}_{ij}, &i\neq j, \\
         \sum_{k\neq i} \bar c_{ki}, &i=j.
    \end{array}
    \right.
\end{align}

Combining \eqref{eq:g-dot-jacobian}, \eqref{eq:h-jacobian-hessian}, and \eqref{eq:C-hessian-laplacian}, and maintaining $\gamma=p^{-1}$ as before, the time derivative of the neural gradients is
\begin{align}
    \label{eq:g-dot-heat-equation}
    \dot \vg
    &= -p \mL \vg.
\end{align}
Equation \eqref{eq:g-dot-heat-equation} is the {heat equation} of the graph corresponding to $\mL$, in analogy to the heat equation in continuous environments \cite{chung1997spectral}. 
This suggests an intuitive physical description, and approximate mathematical solution, of the evolution of neural gradients.

Since $\mL$ is itself time-varying, the dynamics \eqref{eq:g-dot-heat-equation} do not admit a straightforward exact solution.
However, in the vicinity of a weight configuration $\tilde \mcW$, we may take $\mL$ to be fixed at $\tilde\mL$ to obtain a closed-form expression which approximates the exact solution to \eqref{eq:g-dot-heat-equation}.
This corresponds to gradient descent dynamics on the quadratic Taylor approximation of $C$ in $\vh$ evaluated at $\tilde\mcW$.
Under fixed $\tilde \mL$, \eqref{eq:g-dot-heat-equation} is a (time-invariant) linear dynamical system, and
the solution at time $t$ may be expressed in terms of the {heat kernel} $e^{-p \tilde \mL t}$:
\begin{align}
\label{eq:g-solution-equilibrium}
\vg(t)
&= e^{-p \tilde \mL t} \vg^0 \nonumber \\
&= \sum_{i: \lambda_i > 0} e^{-p \lambda_i t} \vv^i (\vv^i)^T \vg^0,
\end{align}
where $\vg^0$ denotes the neural gradient at time $t=0$, and $\lambda_i$, $\vv^i$ denote the $i$th eigenvalue and eigenvector of $\tilde \mL$.
The sum is taken over all $i$ such that $\lambda_i$ is positive, since if $\lambda_i = 0$ for some $i$ then $\vv^i$ is an indicator vector of a connected component, which must be orthogonal to any realizable neural gradient $\vg^0$.

Integrating \eqref{eq:g-solution-equilibrium} yields a closed-form approximate expression for the evolution of the coordinates $\vh$ on the task-preserving manifold:
\begin{align}
\label{eq:h-solution-equilibrium}
\vh(t)
&=  \sum_{i:\lambda_i>0} (p\lambda_i)^{-1} (1-  e^{-p \lambda_i t}) \vv^i (\vv^i)^T \vg^0,
\end{align}
where we have chosen boundary conditions such that $\vh(0) =0$.
As $t \rightarrow \infty$, the network reaches an equilibrium  $\vh \rightarrow \vh^*$, given by
\begin{align}
\label{eq:hstar-solution-equilibrium}
\vh^*
&= p^{-1} \tilde \mL^\dagger \vg^0.
\end{align}
The notation $\tilde \mL^\dagger$ denotes the Moore-Penrose pseudoinverse of $\tilde\mL$.

The approximate closed-form dynamics derived in this section, and especially the role of the Laplacian matrix $\mL$, help shape intuition about the behavior of our rule:
Neural gradients diffuse in a network akin to heat diffusing on a graph, with higher spatial-frequency modes of the graph decaying more quickly than lower spatial-frequency modes.
As illustrated in Fig.  \ref{fig:heterosynapic-plasticity-perturbation}, numerical simulations verify that the quadratic approximate dynamics presented here accurately describe synaptic trajectories near equilibrium.  
{For a 12-neuron, ring topology network, Fig. \ref{fig:heterosynapic-plasticity-perturbation} shows how a perturbation to a synaptic cost is redistributed throughout the network by synaptic balancing dynamics, as well as the evolution of $\vh$ in the Laplacian matrix eigenmode basis.}

\subsection{General bounds on equilibrium cost}
We have provided exact and approximate closed-form dynamics of synaptic balancing.
To complement these results, we now turn to results on the optimal value of the cost function, stated in terms of the initial cost matrix $\mC^0$.

\begin{proposition}
\label{prop:bounds}
Suppose synaptic costs are of the power-law form \eqref{eq:cij-power-law}.
Then given initial neural gradients $\vg^0$ and initial total cost $C^0$, the minimum total cost $C^*$ on the task-preserving manifold is bounded above by
\begin{align}
    \label{eq:C-star-upper-bound}
     C^*
    &\leq C^0 - \frac{1}{8 C^0} \|\vg^0 \|_2^2,
\end{align}
and below by
\begin{align}
    \label{eq:C-star-lower-bound-c-hat}
    \sum_{ij} \hat c_{ij}
    \leq C^*,
\end{align}
where $\hat{c}_{ij} = \sqrt{c_{ij}^0 c_{ji}^0}$ for all $i,j$,  as in \eqref{eq:cij-geometric-mean}.

\end{proposition}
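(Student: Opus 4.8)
The plan is to prove the two bounds essentially independently: the lower bound \eqref{eq:C-star-lower-bound-c-hat} is a one-line arithmetic--geometric mean estimate that holds pointwise in $\vh$, while the upper bound \eqref{eq:C-star-upper-bound} requires a local-smoothness descent argument anchored at the reference configuration $\vh = 0$, and that is where the work lies. For the lower bound I would write the total cost on the task-preserving manifold as $C(\vh) = \sum_{ij} c^0_{ij}\,e^{p(h_j - h_i)}$ and group the $(i,j)$ and $(j,i)$ summands: by AM--GM, $c^0_{ij}e^{p(h_j-h_i)} + c^0_{ji}e^{p(h_i-h_j)} \ge 2\sqrt{c^0_{ij}c^0_{ji}} = 2\hat c_{ij}$, with the $\vh$-dependence cancelling under the square root, while each diagonal term equals $c^0_{ii} = \hat c_{ii}$. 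Summing over all pairs gives $C(\vh) \ge \sum_{ij}\hat c_{ij}$ for every $\vh$, hence the same bound for $C^* = \inf_\vh C(\vh)$. This step is routine.

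For the upper bound the idea is a descent-lemma estimate built on the fact, already recorded in \eqref{eq:C-hessian-laplacian}, that the Hessian of $C$ in the coordinates $\vh$ is $p^2\mL$, the Laplacian of the graph with nonnegative edge weights $\bar c_{ij} = c_{ij}+c_{ji}$. A Gershgorin (weighted-degree) estimate gives $\lambda_{\max}(\mL) \le 2\max_i \sum_k \bar c_{ki} \le 4\sum_{ij}c_{ij} = 4C$, so on the sublevel set $\{\vh : C(\vh)\le C^0\}$ one has $\frac{\partial^2 C}{\partial \vh^2} \preceq 4p^2 C^0\,\mI$. I would then restrict $C$ to the steepest-descent ray from the origin: by \eqref{eq:partial-c-partial-h} the gradient of $C$ at $\vh=0$ is $-p\vg^0$ (and $\vone^T\vg^0 = 0$), so the scalar function $\psi(s) = C(s\vg^0)$ is convex with $\psi(0)=C^0$ and $\psi'(0) = -p\|\vg^0\|_2^2 < 0$; hence $\psi$ is strictly decreasing, and therefore stays $\le C^0$, up to its minimizer $s^*$. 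On $[0,s^*]$ the Hessian bound applies, so $\psi''(s) \le 4p^2C^0\|\vg^0\|_2^2$, and the elementary integral estimate $\psi(0)-\psi(s^*) \ge \psi'(0)^2 / \bigl(2\cdot 4p^2C^0\|\vg^0\|_2^2\bigr)$ reduces, after the $p$'s cancel, to $\|\vg^0\|_2^2/(8C^0)$, which is exactly \eqref{eq:C-star-upper-bound}.

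The part needing care is the interplay between smoothness and the sublevel set: the bound $\frac{\partial^2 C}{\partial \vh^2} \preceq 4p^2C^0\mI$ holds only where $C\le C^0$ --- the Hessian scales with $C$ itself --- so one cannot simply quote a global descent lemma, and instead the monotone decrease of $\psi$ along the ray must be invoked to keep the trajectory inside $\{C \le C^0\}$, with the degenerate case $s^*=+\infty$ handled by passing to an infimum (finite since $C \ge 0$). Everything else --- the Gershgorin bound, the cancellation of $p$, and the AM--GM step --- is mechanical, and I expect no obstruction from the $\vone$-direction since all gradients and the Laplacian live in $\vone^\perp$.
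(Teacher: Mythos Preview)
Your proposal is correct and follows essentially the same route as the paper: the lower bound via AM--GM on each reciprocal pair (the paper frames this as minimizing $c_{ij}+c_{ji}$ under the conserved product $c_{ij}c_{ji}=c^0_{ij}c^0_{ji}$, which is the same computation), and the upper bound via the Gershgorin estimate $\lambda_{\max}(p^2\mL)\le 4p^2C^0$ on the sublevel set followed by a quadratic upper envelope. Your restriction to the steepest-descent ray is in fact a bit more careful than the paper's presentation, which quotes the envelope bound $C\le Q$ on $\{C\le C^0\}$ and minimizes $Q$ globally without explicitly checking that the minimizer of $Q$ lies in the sublevel set---your convexity-of-$\psi$ argument showing $s_Q\le s^*$ is exactly the missing verification.
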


A proof is given in  \S\ref{sec:analytic-bounds}.
Intuitively, the upper bound \eqref{eq:C-star-upper-bound} says that the greater the initial deviation from the balance condition (in the form of large neural gradients), the greater the guarantee that synaptic balancing will improve the total cost. 
The lower bound \eqref{eq:C-star-lower-bound-c-hat} says that the more symmetric the initial cost matrix, the less that synaptic balancing is able to improve the costs.

Recall that in a two-neuron network, the equilibrium cost matrix is symmetric \eqref{eq:two-neuron-solution-connected}.
In fact, we find that that symmetry is preferred by synaptic balancing, and
the equilibrium cost matrix will be symmetric in any circumstance in which symmetry is attainable on the task-preserving manifold.
If $\mC^0$ is the initial matrix of synaptic costs, and $\mC^0$ is positive diagonally symmetrizable, then the minimum is attained at the cost matrix $\hat \mC =e^{-\mH^*} \mC^0 e^{\mH^*}$, whose $i,j$th element is the geometric mean of initial costs \eqref{eq:cij-geometric-mean}, and equality is attained in \eqref{eq:C-star-lower-bound-c-hat}.

As an example of symmetric equilibrium beyond the case of $N=2$, consider a rank-one network whose initial costs factor as $\mC^0 = \va \vb^T$.
This network is strongly connected only if every element of $\va$ and $\vb$ is positive, in which case the equilibrium cost matrix is $\mC^{*} = \va^{*} \vb^{*T}$, where $a^*_i = b^*_i = \sqrt{a_i b_i}$.

It is not just symmetric matrices which are fixed points of synaptic balancing: all normal matrices---including symmetric matrices as a special case---are equilibria of our rule, i.e., they satisfy the balance condition \eqref{eq:balance-condition}. This more general result is shown with \S\ref{sec:normal-matrices-balanced}.

In purely linear networks, arbitrary similarity transformations of the recurrent matrix---not just positive diagonal transformations---are task-preserving.
The positive diagonal similarity transformation \eqref{eq:task-preserving-transformation}, then, will generically not achieve the optimal sensitivity among task-equivalent \emph{linear} networks.
It would be interesting to explore how optimizing over similarity transforms involving all invertible matrices might perform relative to our purely diagonal similarity transformation.
However, we note that such an optimization will not generically lead to a biologically plausible local learning rule.

\section{Synaptic balancing predicts heterosynaptic plasticity}
\label{sec:heterosynaptic-plasticity-perturbation}

\begin{figure*}
\makebox[\textwidth][c]{
\includegraphics[width=6in]{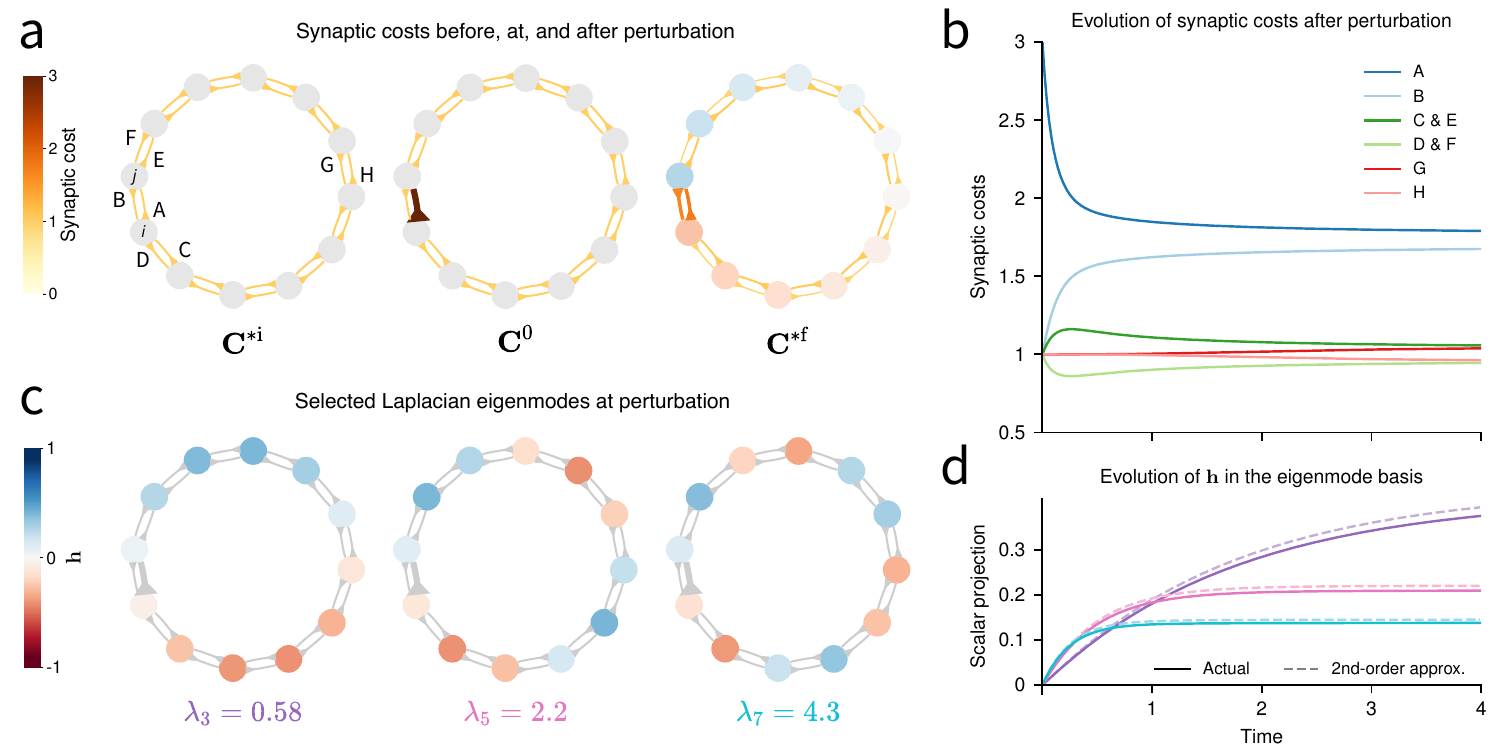}
}
\caption{
Dynamics of synaptic balancing in a 12-neuron ring network with single perturbed synapse.
\textbf{(a)}. 
Left: A ring network is at an initial equilibrium $\mC^\mathrm{*i}$ with all synaptic costs equal to $1$.
Nodes indicate neurons; arrows indicate directed synapses.
Center: Synapse A, from neuron $j$ to $i$, is instantaneously potentiated to a synaptic cost of $c_{ij}^0 = 3$.
Right: Synaptic balancing relaxes to a new equilibrium $\mC^\mathrm{*f}$.
Synapse colors and thickness indicate synaptic cost.  
Neuron colors indicate value of $\vh^\mathrm{*f}$ according to color scheme of panel (c).
\textbf{(b)}
Time course of synaptic balancing following perturbation.
Incoming synapses to neuron $i$ (A and D) are weakened and outgoing synapses from neuron $i$ (B and C) are strengthened.
Incoming synapses to neuron $j$ (B and E) are strengthened and outgoing synapses from neuron $j$ (A and F) are weakened.
Synapses (H and G) that are distant from the site of perturbation respond more slowly than proximate synapses though they reach the same equilibrium values.
\textbf{(c)}
Three eigenmodes $\vv_3, \vv_5, \vv_7$, with eigenvalues $\lambda_3, \lambda_5, \lambda_7$, of the Laplacian matrix $\mL^0$ corresponding to the conductance matrix at the moment of perturbation.
Color indicates mode value at each neuron.
\textbf{(d)}
Dynamics of $\vh$ approximately decompose into the basis of Laplacian eigenmodes. 
The scalar projection of $\vh$ onto each mode $\vv$ is shown along with the
quadratic approximation \eqref{eq:h-solution-equilibrium}, using $\mL^0$ as Laplacian.
Line color matches eigenvalue color in panel (c).
}
\label{fig:heterosynapic-plasticity-perturbation}
\end{figure*}

Previous sections demonstrated that the balance condition \eqref{eq:balance-condition} is not only functionally useful for noise robustness but also a sense generic in trained, regularized recurrent networks.
These findings suggest that synaptic balance may be a widespread phenomenon, with many networks attained through regularized training naturally exhibiting the equilibrium condition of synaptic balancing \eqref{eq:balance-condition}.
In a biological context, where we hypothesize that synaptic balancing would continually operate alongside other weight dynamics, we predict that the network is generically in a state fluctuating around the balance condition.
For example, fast-acting, inherently unstable plasticity rules---Hebbian or otherwise---might temporarily bring synapses and firing rates away from equilibrium, while synaptic balancing slowly tunes the network in response to those modifications, so that the balance condition is always approximately maintained.

This suggests that synaptic balancing might most commonly play a role in fine-tuning networks that are already in the vicinity of a balanced equilibrium.
To explore this scenario in a specific, experimentally testable regime, we consider the response of synaptic balancing to a single synaptic potentiation or depression near equilibrium.
\subsection{Synaptic perturbations induce compensatory heterosynaptic plasticity}
Suppose the that network begins at an initial equilibrium configuration $\mcW^\mathrm{*i}$ satisfying the balance condition \eqref{eq:balance-condition}, and that a perturbation is applied. Specifically, let the $i,j$th synapse ($i\neq j$) be instantaneously modified by a small factor $\eta \approx 0$ while every other synapse is unchanged.
Call the perturbed weight configuration $\mcW^0$. Then
\begin{align}
    \label{eq:Jij-perturbation}
    J_{kl}^0 = J^\mathrm{*i}_{kl} (1 + \eta \delta_{ik}\delta_{jl}).
\end{align}
If $\eta>0$, this perturbation corresponds to synaptic potentiation of $J_{ij}$, and if $\eta<0$, it corresponds to synaptic depression. 

By first-order expansion of the power-law synaptic costs \eqref{eq:cij-power-law}, we have that the costs adjust as
\begin{align}
    \label{eq:cij-perturbation}
    c^0_{kl} = c^\mathrm{*i}_{kl} (1 + \eta p \delta_{ik}\delta_{jl}) .
\end{align}
Plugging the perturbed synaptic costs \eqref{eq:cij-perturbation} into the expression for the neural gradient \eqref{eq:gk-costs}, and using that the network was initialized at equilibrium,  i.e. that $g^\mathrm{*i}_k = 0$ for all $k$, we have,
\begin{align}
    \label{eq:gk-perturbation}
    \begin{split}
        g^0_k
        &= \eta p c^\mathrm{*i}_{ij} (\delta_{ik} - \delta_{jk}).
    \end{split}
\end{align}
Finally, by plugging \eqref{eq:Jij-perturbation} and \eqref{eq:gk-perturbation} into \eqref{eq:Jij-dot}, the learning rule becomes, to first order,
\begin{align}
    \label{eq:Jkl-dot-perturbation}
    \dot J_{kl} = \eta p J_{kl}^\mathrm{*i} c_{ij}^\mathrm{*i}  (\delta_{il} - \delta_{jl} + \delta_{jk} - \delta_{ik})
\end{align}
This rule predicts that a perturbation from equilibrium at a single synapse from $j$ to $i$ will lead to multiplicative plasticity at the outgoing and incoming synapses of both neurons.
Specifically, if $ J_{ij}$ is potentiated, i.e., if $\eta > 0$, then the response of synaptic balancing is potentiation of the presynaptic neuron's incoming synapses ($J_{jl}$) and of the postsynaptic neurons' outgoing synapses ($J_{ki}$), as well as depression of the presynaptic neuron's outgoing synapses ($J_{kj}$) and the postsynaptic neuron's incoming synapses ($J_{il}$).

Following the perturbation from the initial equilibrium configuration $\mcW^\mathrm{*i}$, the system will relax under synaptic balancing to a final equilibrium configuration $\mcW^\mathrm{*f}$.
To calculate the change in synaptic strength of neuron $ij$ from its initial equilibrium to its final equilibrium value, we plug the value of $\vg^0$ from \eqref{eq:gk-perturbation} into \eqref{eq:hstar-solution-equilibrium} to find
\begin{align}
    \label{eq:hstar-solution-equilibrium-perturbation}
    \vh^\mathrm{*f}
    &= \eta c_{ij}^\mathrm{*i} \mL^\dagger (\ve_i - \ve_j),
\end{align}
where $\ve_k$ is the $k$th standard basis vector of $\reals^N$, and we adopt the Laplacian matrix corresponding to the perturbed weight configuration $\mcW^0$. 
Writing the task-preserving transformation $J_{ij}^\mathrm{*f} = J_{ij}^0 e^{h_j^\mathrm{*f} - h_i^\mathrm{*f}}$ in terms of
\eqref{eq:hstar-solution-equilibrium-perturbation},
then the log change in synaptic weight, normalized by the perturbation $\eta$, is
\begin{align}
    \begin{split}
    \label{eq:Jij-log-ratio-perturbation-equilibrium}
    \frac{1}{\eta} \log\left(\frac{J^\mathrm{*f}_{ij}}{J^0_{ij}} \right)
    &= - c_{ij}^0 ((L^\dagger)_{ii} + (L^\dagger)_{jj} - 2 (L^\dagger)_{ij})\\
    &= - c_{ij}^0 R_{ij}^0
    \end{split}
\end{align}
where $R_{ij}^0 = (L^\dagger)_{ii} + (L^\dagger)_{jj} - 2 (L^\dagger)_{ij} \geq 0$ is the {resistance distance} (alternately called effective resistance) between neurons $i$ and $j$, on the graph whose edges have conductances ${\bmC}^0$.
Resistance distance generalizes to arbitrary graphs the formulas for resistance in series and parallel, and $R_{ij}$ is greater where the paths of conductivity between neurons $i$ and $j$ are fewer or weaker
\cite{klein1993resistance}.
We may interpret \eqref{eq:Jij-log-ratio-perturbation-equilibrium} to mean that synaptic balancing counteracts a potentiation or depression of $J_{ij}$ by a factor equal to the fraction of overall conductance between neurons $j$ and $i$ that is attributable to the synaptic cost $c_{ij}^0$, versus to other paths of synaptic costs in the network.

\subsection{Slow, compensatory, and network-wide heterosynaptic plasticity}

We have described synaptic balancing near equilibrium as a slow, compensatory mechanism that adjusts a neuron's input and output synapses in response to a single potentiation or depression, in a manner closely resembling known phenomena of heterosynaptic plasticity \cite{chistiakova2014heterosynaptic, chistiakova2015homeostatic}.
Here we discuss experimental evidence in connection with the predictions of synaptic balancing.

Studies inducing Hebbian plasticity at one or more target synapses have repeatedly observed compensatory heterosynaptic modification at nearby synapses on the same dendrite \cite{dunwiddie1978long, royer2003conservation, oh2015heterosynaptic, el2018locally, field2020heterosynaptic}.
For example, two-photon \emph{in vivo} imaging of synaptic spines revealed that heterosynaptic depression of inputs to mouse V1 layer 2/3 pyramidal neurons follows functionally induced potentiation (LTP) of synapses on the same dendrite---a result exactly predicted by equation \eqref{eq:Jkl-dot-perturbation}.

Our specific predictions about the magnitude and direction of heterosynaptic effects also find experimental support. Equation \eqref{eq:Jkl-dot-perturbation} predicts heterosynaptic modifications that are multiplicative, with the change in weight proportional to the original size of the synapse, and independent of the synapse's sign (i.e., inhibitory or excitatory).
Patch-clamp recordings of synapses in cortical slice after induction of spike-timing-dependent Hebbian plasticity in paired synapses matched the multiplicative principle, with the heterosynaptic effect applying more strongly to the initially larger unpaired synapses than to the initially smaller ones \cite{field2020heterosynaptic}.
The same work found that heterosynaptic effects could be explained solely in terms of the absolute strengths of the paired and unpaired synapses: both E and I synapses experienced compensatory heterosynaptic depression when the paired synapse was potentiated, and both E and I unpaired synapses experienced compensatory potentiation when the paired synapse was depressed.

Our rule may be distinguished from existing concepts of compensatory heterosynapstic plasticity \cite{chistiakova2015homeostatic, zenke2017hebbian} through a further prediction, to our knowledge not yet experimentally explored: in response to the potentiation of a single neuron's inputs, 
not only do input synapses experience heterosynaptic depression, but also output synapses experience potentiation.
This concept is demonstrated in Fig. \ref{fig:heterosynapic-plasticity-perturbation} for a 12-neuron ring network that experiences an instantaneous synaptic perturbation.

Mechanisms supporting the input-synapse half of our conjectured balancing dynamics are well studied. In addition to the known heterosynaptic effects already mentioned, neurons are known to multiplicatively and bidirectionally scale all incoming synapses through synaptic scaling \cite{turrigiano1998activity, turrigiano2008self, hengen2013firing, turrigiano2017dialectic}.
While the particular homeostatic trigger posited by synaptic scaling---deviations from a set-point firing rate---differs from the trigger considered in our model, 
synaptic scaling lends evidence to the hypothesis that a neural mechanism exists to distribute a negative feedback signal to incoming synapses and induce coordinated compensatory plasticity, such as is predicted by synaptic balancing.

This work introduces a new view on the possible functional roles of compensatory heterosynaptic plasticity.
Existing literature has largely focused on the important role it may play in constraining the inherent instability of Hebbian plasticity
 \cite{chistiakova2014heterosynaptic, chistiakova2015homeostatic, zenke2017hebbian, Zenke2017-fo}.
In our model, the role of heterosynaptic plasticity is to maintain a functional state of noise robustness even as other learning processes homosynaptically modify synapse strength.
We believe that these traits are suitably thought of as a novel model of homeostatic plasticity: one whose aim is to maintain, through negative feedback, the functionally-relevant balance condition \eqref{eq:balance-condition}.

\section{Summary}
In this work, we introduced a positive diagonal similarity transformation of the recurrent weight matrix that preserves task performance in nonlinear recurrent networks with homogeneous nonlinearities. 
We showed that a simple class of cost functions, notably including the sensitivity of the neural dynamics to noise, are convex in the coordinates of the symmetry.
From this observation, we derived a local learning rule---synaptic balancing---that globally maximizes network robustness whenever the recurrent network is strongly connected.

We found that the synaptic cost matrix is balanced at equilibrium, and that this balance condition arises at every local minimum of a very general class of regularized loss functions.
To further understand how synaptic balancing dynamics shape network connectivity in the vicinity of equilibrium, we approximated the dynamics of our rule through a heat equation
and described the diffusion of synaptic modifications throughout the network according to its Laplacian eigenmodes.
We found that near equilibrium, synaptic balancing is well summarized as slow, compensatory, and heterosynaptic, and that experimental evidence of heterosynaptic plasticity is consistent with our predictions.

Overparameterization in neural network models may be linked to the biological processes which sustain task performance under noisy conditions---a possibility known to experimental neuroscience for some time \cite{marder2006variability}.
Here, we have provided a concrete, analytically tractable example of this concept, in which an identifiable symmetry in network parameterization gives rise to a corresponding local process for maintaining stable task performance.
We hope that this work may provide a fruitful framework for future research relating homeostatic processes to the mathematical structures underpinning neural network redundancy.

\section*{Acknowledgements}

The authors wish to thank Brandon Benson, Subhaneil Lahiri, and Jonathan Timcheck for helpful discussions and feedback.
C.H.S. thanks the Blavatnik Family Foundation. S.E.H. thanks the National Defense Science and Engineering Graduate Fellowship and the Stanford Graduate Fellowship.  S.G. thanks the James S. McDonnell and Simons Foundations, NTT Research, and an NSF CAREER Award for support. 

\printbibliography
\pagebreak

\appendix

\section{The TPT is task-preserving}\label{sec:prop1}

\textbf{Proposition 1.}\textit{ [Task-preserving transformation]
The transformation \eqref{eq:task-preserving-transformation} exactly preserves the input-output relationship of the neural dynamics \eqref{eq:neural-dynamics}.
Given two networks receiving the same time course of inputs $\vu(t)$ and with weight configurations $\tmcW$ and $\mcW = \pi_\vh(\tmcW)$ respectively, then:
\begin{enumerate}
    \item If $\tvx(t)$ is the time course of hidden unit neural activity under $\tmcW$, then $e^{-\mH} \tvx(t)$ is the time course of hidden unit neural activity under $\mcW$.
    \item If $\tvy(t)$ is the time course of output neural activity under $\tmcW$, then $\tvy(t)$ is also the time course of output neural activity under $\mcW$. 
\end{enumerate}}

\begin{proof}

From \eqref{eq:neural-dynamics}, the dynamics of the neural activity $\vx$ in the network with transformed weights $\mcW$ is
\begin{align*}
    \vf_\mcW(\vx, \vu)
    &= -\vx +  e^{-\mH}  \tmJ e^{\mH} \phi(\vx) + e^{-\mH}  \tWin \vu.
\end{align*}
Multiplying on the left by $e^\mH$, and using that $\phi$ is homogeneous, we have
\begin{align}
    \begin{split}
    \label{eq:rescaled-dynamics}
    e^{\mH}  \vf_\mcW(\vx, \vu)
    &=  -e^{\mH} \vx +  \tmJ \phi(e^{\mH} \vx) + \tWin \vu \\
    &=  \vf_{\tmcW}(\vz, \vu),
    \end{split}
\end{align}
where $\vz = e^\mH \vx$.
The time integral of the left hand side is the scaled neural activity of the network with weights $\mcW$:
\begin{align}
\begin{split}
    \label{eq:rescaled-dynamics-LHS-integral}
    \int_0^t e^{\mH}  \vf_\mcW(\vx, \vu) \dif t'
    &= e^{\mH} \int_0^t  \vf_\mcW(\vx, \vu) \dif t' \\
    &= \tau e^{\mH} \vx(t),
\end{split}
\end{align}
using that $\vx(0) = 0$.
The time integral of the right hand side of \eqref{eq:rescaled-dynamics} is the neural activity of the network with weights $\mcW^0$:
\begin{align}
\label{eq:rescaled-dynamics-RHS-integral}
    \int_0^t \vf_{\tmcW}(\vz, \vu) \dif t'
    &= \tau \vx^0(t).
\end{align}
Equating \eqref{eq:rescaled-dynamics-LHS-integral} and \eqref{eq:rescaled-dynamics-RHS-integral}, we find that
\begin{align}
\label{eq:tpt-scales-hiddens}
\vx(t) = e^{-\mH}\tvx(t),
\end{align}
demonstrating the first part of the proposition.

From \eqref{eq:readout} and \eqref{eq:tpt-scales-hiddens}, we have that the readout of the transformed network is
\begin{align*}
    \vy(t) &= \mWout \vx(t) \\
    &= (\tWout e^\mH) (e^{-\mH} \tvx(t)) \\
    &=  \tWout \tvx(t) \\
    &= \vy^0(t),
\end{align*}
demonstrating the second part of the proposition.
\end{proof}

\section{Existence and stability of equilibria}

\label{sec:equilibrium-existence-condition-appendix}

In this section we prove that under reasonable assumptions on the topology of the network, a minimum-cost weight configuration exists on the task-preserving manifold and is exponentially stable under synaptic balancing.
The results proved in this section imply Prop. \ref{prop:topology-convergence} of the main text.

We begin with some definitions related to network topology.
A connected component of the network refers to a connected component of the undirected graph whose edge weights are given by the (symmetric) conductance matrix $\bmC$ \eqref{eq:cij-bar-defn}.
Further, a set of neurons $\mcK$ is strongly connected if and only if for every $i,j \in \mcK$ one may follow a path of positive (directed) synaptic costs from $i$ to $j$.
Under the task-preserving transformation, synaptic costs which are initially zero remain zero, and costs which are initially positive remain positive, so topological properties of the network wiring diagram do not vary over the course of synaptic balancing.
Throughout the appendix we assume synaptic costs are of the power-law form \eqref{eq:cij-power-law}, unless explicitly stated otherwise.

\subsection{Total cost attains a global minimum on the task-preserving manifold in strongly connected networks}
We now establish a lemma on the boundedness of sublevel sets of the total cost $C$ in strongly connected networks.

As a preliminary comment, we generalize the observation made in the main text that $\sum_i h_i(t) = 0$ for all $t \geq 0$ \eqref{eq:sum-hi-zero}.
In particular, if $\mcK$ is a connected component of the network, then $\sum_{k \in \mcK} g_k = 0$
for every synaptic cost matrix.
 This may be seen by writing out $g_k$ in terms of synaptic costs \eqref{eq:gk-costs} and using that $c_{ij} =0$ if neurons $i$ and $j$ are in different connected components. 
 If a network has $K$ connected components and $\vk_1, \vk_2, \ldots, \vk_K \in \reals^N$ are indicator vectors for each component, then for each $i=1, 2, \ldots, k$,
 \begin{align}
    \label{eq:sum-hi-zero-connected-component}
\vk_i^T \vh = \vk_i^T \vh^0 = 0,
\end{align}
since $\vg = \dot \vh$.

\begin{lemma}
\label{lemma:state-space-bounded-strongly-connected}
Suppose that every connected component of a network is strongly connected, and that the network evolves under synaptic balancing from an initial total cost $C^0$.
Then there exists some compact box $\mcB \subset \reals^N$ which contains the sublevel set $\{\vh: C(\vh) \leq C^0\}$.
\end{lemma}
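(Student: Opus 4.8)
The plan is to turn the informal argument already sketched in the proof of Prop.~\ref{prop:topology-convergence} into a clean statement about the whole sublevel set. The strategy has three ingredients: (i) use that $C$ is non-increasing along synaptic balancing, so every $c_{ij}$ stays below $C^0$; (ii) convert such bounds on individual costs into one-sided bounds on the differences $h_j-h_i$, directly for edges of positive initial cost and indirectly, by telescoping along a path, for all other pairs in a strongly connected component; (iii) combine the resulting two-sided bounds on $|h_i-h_j|$ with the connected-component sum-zero constraints \eqref{eq:sum-hi-zero-connected-component} to bound each coordinate $h_k$. One caveat I would flag up front: because $C(\vh)=\sum_{ij}c^0_{ij}e^{p(h_j-h_i)}$ is invariant under adding a constant to all coordinates inside a connected component, the unrestricted sublevel set is never bounded, so throughout I restrict to the affine slice $\{\vh:\vk_m^T\vh=0,\ m=1,\dots,K\}$ on which the dynamics in fact lives; the box $\mcB$ contains the sublevel set intersected with this slice.

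First I would record that synaptic balancing is gradient descent on the convex function $C=\sum_{ij}c_{ij}$ with all $c_{ij}\ge 0$, so $C(\vh)\le C^0$ forces $c_{ij}\le C^0$ for every pair $i,j$. For an edge with $c^0_{ij}>0$, substituting $c_{ij}=c^0_{ij}e^{p(h_j-h_i)}$ from \eqref{eq:cij-sensitivity-h} gives the direct upper bound $h_j-h_i\le p^{-1}\log(C^0/c^0_{ij})$.

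Next, for a pair $i,j$ in a common connected component $\mcK$, strong connectedness supplies a directed path of positive-cost edges $j=v_0\to v_1\to\cdots\to v_L=i$. Writing $h_j-h_i=\sum_{\ell=1}^L(h_{v_{\ell-1}}-h_{v_\ell})$ and applying the direct bound to each edge $v_{\ell-1}\to v_\ell$ (whose cost is $c_{v_\ell v_{\ell-1}}>0$) gives an upper bound on $h_j-h_i$ depending only on $C^0$ and the fixed initial costs along the path. Swapping $i$ and $j$ bounds $h_i-h_j$ the same way, so $|h_i-h_j|\le M$ for all $i,j$ in the same component, where $M$ is the maximum of these path-bounds over the finitely many pairs (choosing, say, a fixed simple path for each pair keeps $M$ finite).

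Finally I would pass from differences to coordinates: on the slice we have $\sum_{k\in\mcK_m}h_k=0$ for each component $\mcK_m$ of size $n_m$, so $h_k=\tfrac1{n_m}\sum_{l\in\mcK_m}(h_k-h_l)$ and hence $|h_k|\le M$. Thus the (restricted) sublevel set lies in the compact box $\mcB=[-M,M]^N$, which proves the lemma; since the trajectory of synaptic balancing begins with $C=C^0$ and remains on the slice by \eqref{eq:sum-hi-zero-connected-component}, it stays in $\mcB$ as well. I do not expect a real obstacle here: once the gauge-freedom point is handled by restricting to the slice, everything is elementary bookkeeping over finitely many edges and paths; the only step that needs any care is making sure the indirect bounds are assembled consistently with the edge orientation convention $J_{ij}:j\to i$.
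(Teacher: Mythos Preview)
Your proof is correct and follows essentially the same route as the paper: bound each $c_{ij}$ by $C^0$, invert to get one-sided bounds on $h_j-h_i$ along positive-cost edges, telescope along strongly-connected paths to bound all pairwise differences, and then use the per-component sum-zero constraint to bound each $|h_k|$. Your explicit flagging of the gauge freedom (that the unrestricted sublevel set is unbounded and one must work on the affine slice $\{\vh:\vk_m^T\vh=0\}$) is a useful clarification that the paper leaves implicit in its invocation of \eqref{eq:sum-hi-zero-connected-component}.
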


\begin{proof}
We give an intuitive upper bound on the individual synaptic costs, and we telescope it along paths of synapses in a strongly connected network to show that the sublevel set of $C^0$ is bounded.

To approximate the set of coordinates $\vh$ such that $C \leq C^0$,
we first note that a simple upper bound on the synaptic cost $c_{ij}$ for weight configurations in the sublevel set $\{\vh: C(\vh) \leq C^0\}$ is
\begin{align}
    \label{eq:cij-C0-inequality}
    c_{ij} \leq C^0,
\end{align}
since $c_{ij} \leq C$ and by assumption $C \leq C^0$.
Using that $c_{ij} = c_{ij}^0 e^{p(h_j - h_i)}$ \eqref{eq:cij-sensitivity-h}, we solve for $e^{h_j - h_i}$ in \eqref{eq:cij-C0-inequality}, assuming $c^0_{ij} \neq 0$:
\begin{align}
\begin{split}
    \label{eq:cij-bound}
    e^{h_j - h_i}
    &\leq (C^0/c^0_{ij})^{1/p}.
\end{split}
\end{align}

To obtain an analogous upper bound when $c_{ij} = 0$, now let $m$ and $n$ any be two neurons in the same connected component of the network.
By the strongly connected assumption, there is some path of neurons $(m, i_1, i_2, \ldots, i_\kappa, n)$ from $m$ to $n$ such that each consecutive directed synapse along the path has positive synaptic cost.
Telescoping \eqref{eq:cij-bound} along this path, we obtain
 \begin{align*}
    e^{h_m - h_n} 
    &= e^{h_m - h_{i_1}} e^{h_{i_1} - h_{i_2}} \cdots
     e^{h_{i_\kappa} - h_n} \\
     & \leq (C^0/c^0_{i_1 m})^{1/p} (C^0/c^0_{i_2 i_1})^{1/p} \cdots (C^0/c^0_{n i_\kappa})^{1/p}.
\end{align*}
Then apply this procedure along a path from neuron $n$ to neuron $m$ to obtain an upper bound for $e^{h_n-h_m}$.

The upper bounds for $e^{h_m-h_n}$ and $e^{h_n-h_m}$ imply that $|h_m - h_n|$ is bounded above for every $m,n$ in a shared a connected component. 
Combined with the constraint \eqref{eq:sum-hi-zero-connected-component}, it follows that $|h_m|$ is bounded above for all neurons $m$, and the sublevel set $\{\vh: C(\vh) \leq C^0\}$ is contained within some box $\mcB \subset \reals^N$.
\end{proof}

The previous lemma may be strengthened into a general result giving sufficient and necessary conditions on the existence of equilibria of synaptic balancing.
The following result is known \cite{hooi1970class, eaves1985line}; for completeness we state and prove it in the language of this paper.

\begin{proposition}
\label{prop:global-minimum-iff-strongly-connected}
The total cost $C$ \eqref{eq:total-cost} attains a global minimum on the task-preserving manifold
if and only if every connected component of the network is strongly connected.
\end{proposition}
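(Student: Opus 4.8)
The plan is to prove the two directions separately, with the ``if'' direction resting on Lemma~\ref{lemma:state-space-bounded-strongly-connected} and the ``only if'' direction on an explicit recession direction along which $C$ strictly decreases. First I would record a preliminary reduction: since $c_{ij}^0 = 0$ whenever $i$ and $j$ lie in different connected components, $C = \sum_{ij} c_{ij}^0 e^{p(h_j - h_i)}$ is unchanged when a common constant is added to all $h_k$ with $k$ in a fixed component, so $C(\vh)$ depends only on the projection of $\vh$ onto the subspace $V$ orthogonal to the component indicator vectors $\vk_1, \dots, \vk_K$. Hence $C$ attains a global minimum on the task-preserving manifold if and only if it attains one on $V$, and I would argue on $V$ throughout.

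For the ``if'' direction, assuming every connected component is strongly connected, I would note that $C$ is continuous (a finite sum of exponentials of linear functions of $\vh$) and that, by Lemma~\ref{lemma:state-space-bounded-strongly-connected}, the sublevel set $\{\vh \in V : C(\vh) \le C^0\}$ is contained in a compact box, where $C^0 = C(0)$. Being also closed, this set is compact and nonempty, so $C$ attains a minimum on it at some $\vh^*$; and since every $\vh \in V$ outside the set has $C(\vh) > C^0 \ge C(\vh^*)$, the point $\vh^*$ is a global minimizer of $C$ on $V$, hence on the task-preserving manifold.

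For the ``only if'' direction I would argue the contrapositive. Given a connected component $\mcK$ that is not strongly connected, I would contract each of its strongly connected components in the directed graph with edges $j \to i$ (for $c_{ij}^0 > 0$); the resulting condensation is a directed acyclic graph with at least two vertices, so it has a source vertex, whose neuron set I call $\mcS$ --- a nonempty proper subset of $\mcK$ with $\mcK \setminus \mcS$ also nonempty. Since $\mcS$ is a source, no positive-cost synapse enters $\mcS$ from $\mcK \setminus \mcS$, i.e. $c_{ij}^0 = 0$ whenever $i \in \mcS$ and $j \in \mcK \setminus \mcS$; and since the undirected graph on $\mcK$ is connected, at least one positive-cost synapse leaves $\mcS$, i.e. $c_{i_0 j_0}^0 > 0$ for some $i_0 \in \mcK \setminus \mcS$, $j_0 \in \mcS$. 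I would then take $\vd \in \reals^N$ with $d_k = 1$ for $k \in \mcK \setminus \mcS$ and $d_k = 0$ otherwise, and check $c_{ij}(\vh + s\vd) = c_{ij}^0 e^{p(h_j - h_i)} e^{ps(d_j - d_i)}$ term by term: all terms are unchanged except those with $i \in \mcK \setminus \mcS$, $j \in \mcS$, which acquire a factor $e^{-ps}$, while those with $i \in \mcS$, $j \in \mcK \setminus \mcS$ are identically zero. This gives $C(\vh + s\vd) = A(\vh) + e^{-ps} B(\vh)$ with $B(\vh) = \sum_{i \in \mcK \setminus \mcS,\, j \in \mcS} c_{ij}(\vh) \ge c_{i_0 j_0}(\vh) > 0$, so $s \mapsto C(\vh + s\vd)$ is strictly decreasing; in particular $C(\vh + \vd) < C(\vh)$ for every $\vh$, so $C$ has no global minimizer.

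The hard part will be the graph theory in the ``only if'' direction: correctly extracting from ``not strongly connected'' a proper nonempty subset $\mcS$ of a component that receives no positive-cost synapses from its complement (the source of the condensation DAG is the clean way, and one must invoke undirected connectedness of the component to be sure $\mcS$ has an outgoing synapse), and then tracking the orientation convention for $c_{ij}$ --- the synapse from $j$ to $i$, with $c_{ij} = c_{ij}^0 e^{p(h_j - h_i)}$ --- carefully enough to see which cross-terms carry the decaying factor $e^{-ps}$ and which vanish. Given Lemma~\ref{lemma:state-space-bounded-strongly-connected}, the ``if'' direction is then routine.
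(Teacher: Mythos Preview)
Your proposal is correct. The ``if'' direction is essentially identical to the paper's: invoke Lemma~\ref{lemma:state-space-bounded-strongly-connected} for compactness of the sublevel set, then continuity of $C$ yields a minimizer.

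The ``only if'' direction differs in method, though the underlying combinatorial obstruction is the same. Both you and the paper extract, from a non--strongly-connected component $\mcK$, a proper nonempty subset with no incoming positive-cost synapses from its complement in $\mcK$ (you take a source SCC of the condensation; the paper takes the set $\mcI$ of neurons that can reach a fixed node $m$). From there the arguments diverge: the paper sums the first-order optimality condition \eqref{eq:balance-condition} over that set to obtain the aggregate balance \eqref{eq:balance-condition-aggregate} and shows it is unsatisfiable, hence no stationary point of the convex $C$ exists; you instead exhibit an explicit recession direction $\vd$ (the indicator of $\mcK\setminus\mcS$) along which $C(\vh+s\vd)=A(\vh)+e^{-ps}B(\vh)$ strictly decreases, so no minimizer can exist. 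Your route is slightly more self-contained, since it does not appeal to the equivalence between global minima and the balance condition; the paper's route has the advantage of tying the failure directly to \eqref{eq:balance-condition}, which is the paper's organizing theme. One cosmetic point: your $\vd$ is not itself in the subspace $V$ you reduced to, but since $C$ is invariant under shifts by component indicators this is harmless---either argue on all of $\reals^N$ or replace $\vd$ by its projection onto $V$.
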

\begin{proof}
In the first direction, assume that there is at least one pair of neurons $m,n$ in a connected component $\mcK$ such that $n$ cannot reach $m$ through a directed path of positive synaptic costs.
Our approach is to show that there is no weight configuration with this topology that can satisfy \eqref{eq:balance-condition}.

Suppose that a connected component of the network is partitioned into two sets of neurons $\mcI$ and $\mcJ$.
We sum \eqref{eq:balance-condition} over $k \in \mcI$ and use the fact that $c_{kl} = c_{lk} = 0$ if neurons $k$ and $l$ are in different connected components to obtain
\begin{align}
    \label{eq:balance-condition-aggregate}
    \sum_{i\in\mcI, j\in\mcJ} c_{ji}
    &=
    \sum_{i\in\mcI, j\in\mcJ} c_{ij}.
\end{align}
In short, if $C$ is minimized, then aggregate costs from $\mcI$ to $\mcJ$ are equal to the aggregate costs from $\mcJ$ to $\mcI$.

Let $\mcI\subset \mcK$ be the set of neurons which can reach $m$ through a directed path of synaptic costs, and
let $\mcJ\subset \mcK$ be set of neurons which cannot.
The sets $\mcI$ and $\mcJ$ are not empty; they include neurons $m$ and $n$ respectively.
On one hand, the synaptic cost $c_{ij}$ must be zero for every $i \in \mcI$ and $j \in \mcJ$ (else $j$ could reach $m$).
On the other hand, at least one cost $c_{ji}$, for some $i\in\mcI$ and $j\in\mcJ$, must be greater than zero (else $\mcI$ and $\mcJ$ would be in different connected components).
It is impossible for the total synaptic costs from $\mcI$ to $\mcJ$, which are positive, to equal the total synaptic costs from $\mcJ$ to $\mcI$, which are zero, and no solution exists to \eqref{eq:balance-condition-aggregate}---nor, by extension, to \eqref{eq:balance-condition}. 
As \eqref{eq:balance-condition} is satisfied by all global minima of $C$ on the task-preserving manifold, then no such minimum exists.

In the other direction, by Lemma \ref{lemma:state-space-bounded-strongly-connected}, there exists some compact box $\mcB \in \reals^N$ containing the sublevel set of $C^0$.
Let $C^* = \inf\{C(\vh): \vh \in \mcB\}$. 
As $C$ is a continuous function on the compact set $\mcB$, there is some value $\vh^* \in \mcB$ such that $C(\vh^*) = C^*$.
So $\vh^*$ globally minimizes $C$.
\end{proof}

\subsection{Global minima of the total cost are exponentially stable}

We have shown conditions under which an optimal weight configuration $\mcW^*$ exists on the task-preserving manifold.
We now argue, via an argument from strong convexity, that such a $\mcW^*$, when it exists, is uniquely determined and globally exponentially stable.
Strong convexity is the property that the curvature of the objective function has a positive lower bound, and it implies that a minimum is unique and exponentially stable under gradient descent \cite[Ch. 9]{boyd2004convex}.
Our approach is to re-parameterize the optimization problem and show that the cost function is strongly convex in the subspace of $\reals^N$ that is relevant to synaptic balancing dynamics.

\begin{proposition}
If a weight configuration $\mcW^*$ minimizes the total cost $C$ on the task-preserving manifold, then
(i) $\mcW^*$ is the unique weight configuration on the task-preserving manifold satisfying both \eqref{eq:balance-condition} and \eqref{eq:sum-hi-zero-connected-component}, and 
(ii) $\mcW^*$ is a globally exponentially stable equilibrium of synaptic balancing.
\end{proposition}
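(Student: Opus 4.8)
The plan is to re-parameterize synaptic balancing on the affine subspace its dynamics actually explore and to show that the total cost is \emph{strongly} convex there. From \eqref{eq:C-hessian-laplacian} the Hessian of $C$ in $\vh$ is $p^2\mL$, the weighted Laplacian of the conductance graph, whose kernel is exactly the span of the connected-component indicators $\vk_1,\dots,\vk_K$ (the conductance graph has a fixed edge set, and its undirected components are by definition the connected components of the network). Since synaptic balancing preserves each $\vk_i^T\vh$ by \eqref{eq:sum-hi-zero-connected-component}, I would restrict attention to the linear subspace $V=\{v\in\reals^N:\vk_i^Tv=0\ \text{for all }i\}$, on which the restricted Hessian $p^2\mL|_V$ is positive definite at every point. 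Because $C$ is moreover constant along $\mathrm{span}\{\vk_1,\dots,\vk_K\}$ (each $c_{ij}$ depends only on a difference $h_j-h_i$ within a component and vanishes across components), its global minimizers form a single coset of that span, which meets $V$ in exactly one point; call it $\vh^*$ and set $\mcW^*=\pi_{\vh^*}(\tmcW)$.

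For part (i): a configuration satisfies the balance condition \eqref{eq:balance-condition} if and only if $\vg=\partial C/\partial\vh$ vanishes there, which — $C$ being convex — happens if and only if the configuration globally minimizes $C$ on the task-preserving manifold. Among these minimizers exactly one has coordinates in $V$, namely $\vh^*$: strict convexity of $C|_V$, which follows from positive-definiteness of $p^2\mL|_V$, forbids a second. Hence $\mcW^*$ is the unique weight configuration satisfying both \eqref{eq:balance-condition} and \eqref{eq:sum-hi-zero-connected-component}.

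For part (ii): since $C$ attains a minimum, Prop. \ref{prop:global-minimum-iff-strongly-connected} forces every connected component to be strongly connected, so Lemma \ref{lemma:state-space-bounded-strongly-connected} applies. From any initialization with total cost $C^0$, synaptic balancing does not increase $C$, so the entire trajectory (together with $\vh^*$) stays in the sublevel set $\{\vh:C(\vh)\le C^0\}$, which lies inside a compact box $\mcB$; note $\mcB\cap V$ is compact and convex. On $\mcB$ the conductances $\bar c_{ij}$ are bounded above, and those on edges of the wiring diagram are bounded below away from $0$ (because $\vh$ is confined to $\mcB$), so $\vh\mapsto\lambda_{\min}(\vh):=\min\{v^T\mL(\vh)v:v\in V,\ \|v\|=1\}$ is continuous and strictly positive and therefore attains a positive minimum $m_0$ on $\mcB$. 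Thus $C$ is $p^2m_0$-strongly convex on $\mcB\cap V$, and by the standard gradient-flow argument (strong convexity implies the gradient-domination bound $C(\vh)-C^*\le\tfrac{1}{2p^2m_0}\|\partial C/\partial\vh\|^2$, hence $\tfrac{d}{dt}(C(\vh)-C^*)\le-2\gamma p^2m_0\,(C(\vh)-C^*)$, hence geometric decay of $C(\vh)-C^*$ and, via $\tfrac{p^2m_0}{2}\|\vh-\vh^*\|^2\le C(\vh)-C^*$, of $\|\vh(t)-\vh^*\|$) one gets $\vh(t)\to\vh^*$ exponentially with rate governed by $m_0$. Since $\pi_\vh(\tmcW)$ is continuous in $\vh$, the weight configuration converges exponentially to $\mcW^*$; the initialization being arbitrary, $\mcW^*$ is globally exponentially stable.

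The main obstacle is precisely the uniform lower bound $m_0>0$ on the relevant Laplacian eigenvalue over the compact sublevel set — i.e.\ ruling out that the algebraic connectivity of the conductance graph degenerates along the trajectory. This is where Lemma \ref{lemma:state-space-bounded-strongly-connected} does the heavy lifting: confining $\vh$ to $\mcB$ keeps every conductance with $c^0_{ij}+c^0_{ji}>0$ bounded below away from zero, and since the edge set of the conductance graph never changes under synaptic balancing, $\mL(\vh)|_V$ stays positive definite, hence uniformly so by continuity on the compact $\mcB$. Two minor points remain to be dispatched cleanly: all convexity and strong-convexity statements must be made on $V$ rather than on $\reals^N$, since $C$ is flat along $\mathrm{span}\{\vk_1,\dots,\vk_K\}$; and one should remark that convergence of $\vh$ transfers to convergence of $(\mWin,\mJ,\mWout)$ because $\pi_\vh$ is continuous in $\vh$.
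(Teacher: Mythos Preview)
Your proposal is correct and follows essentially the same route as the paper: restrict to the subspace orthogonal to the connected-component indicators, use Prop.~\ref{prop:global-minimum-iff-strongly-connected} and Lemma~\ref{lemma:state-space-bounded-strongly-connected} to confine the trajectory to a compact set, and then argue that the restricted Hessian $p^2\mL|_V$ has a uniform positive lower eigenvalue there, yielding strong convexity and hence uniqueness and exponential stability. The paper introduces explicit reduced coordinates $\tilde\vh=\mU^T\vh$ via an orthonormal basis $\mU$ for your subspace $V$ and invokes Courant--Fischer to identify the relevant eigenvalue as $\lambda_{K+1}(\mL)$, whereas you work intrinsically on $V$ and bound the Rayleigh quotient directly; both arguments land on continuity of that eigenvalue over the compact box to extract the uniform bound, and both conclude via the standard strong-convexity-implies-exponential-stability fact (which you spell out more explicitly than the paper does).
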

\begin{proof}
Claims (i) and (ii) both follow by showing that for every initial weight configuration $\mcW^0$, the cost function $C$ is strongly convex on a suitably defined re-parameterization of the task-preserving manifold, and that the dynamics of synaptic balancing is isomorphic to gradient descent dynamics in the strongly convex parameterization.

We begin with deriving a strongly convex formulation of the total cost.
The $\{ \vk_i \}_{i=1}^K$ in \eqref{eq:sum-hi-zero-connected-component} are mutually orthogonal, since they have non-overlapping nonzero entries, so
$\mathrm{dim}\{\vk_i \}_{i=1}^K$ = K.
Let $\mU$ be any $(N-K)\times K$ matrix whose columns form an orthonormal basis for the orthogonal complement of $\mathrm{span}\{\vk_i \}_{i=1}^K$.
Then the trajectory of synaptic balancing is contained within the range of $\mU$,
i.e., $\mU \mU^T \vh(t) = \vh(t)$, and
there exist reduced coordinates $\tilde \vh \in \reals^{N-K}$ satisfying
\begin{align}
    \label{eq:h-tilde-defn}
    \tilde \vh(t) = \mU^T \vh(t) \,
    \Longleftrightarrow
    \,
    \vh(t) = \mU \tilde \vh(t)
\end{align}
for all $t \geq 0$.
We abuse notation slightly, writing $C(\vh)$ and $C(\tilde \vh)$ to refer to the total cost when considered as a function of, respectively, the original and reduced coordinates.

Suppose that a network has initial cost matrix $\mC^0$.
By assumption, a global minimum exists at $\mcW^*$, so by Prop. \ref{prop:global-minimum-iff-strongly-connected}, the connected components of $\mC^0$ are each strongly connected.
Then by Lemma \ref{lemma:state-space-bounded-strongly-connected}, some compact box $\mcB \subset \reals^N$ exists such that the sublevel set $\{\vh \in \reals^{N}: \, C( \vh) \leq C^0 \}$ is contained in $\mcB$ for all $t\geq0$.
Then similarly $\{\tilde \vh \in \reals^{N-K}: \, C(\tilde \vh) \leq C^0 \} \subset \tilde \mcB$, where $\tilde \mcB$ is the projection of $\mcB$ onto the range of $\mU$.

We will now show that $C$ is strongly convex with respect to $\tilde \vh$ on $\tilde \mcB$.
This amounts to finding some $m>0$ such that for every $\tilde \vh \in \tilde \mcB$,
\begin{align}
    \label{eq:strong-convexity-condition}
    m
    &\leq
    \min_{\tilde \vu \in \reals^{N-K}:\, \|\tilde \vu\|=1}
    \tilde \vu^T \frac{\partial^2 C}{\partial \tilde \vh^2} \tilde \vu.
\end{align}
The Hessian of $C$ with respect to $\tilde \vh$ is, via \eqref{eq:h-tilde-defn} and \eqref{eq:C-hessian-laplacian}, 
\begin{align*}
    \frac{\partial^2 C}{\partial \tilde \vh^2}
    &= \mU^T \frac{\partial^2 C}{\partial \vh^2} \mU
    \nonumber
    \\
    &= p^2 \mU^T \mL \mU,
\end{align*}
where $\mL$ is the Laplacian matrix associated with the graph with weights $\bar \mC = \mC + \mC^T$.

A basic result on Laplacian matrices is that the null space of the Laplacian is the span of the indicator vectors of the connected components of the associated graph \cite[Ch. 1]{chung1997spectral}.
In our case, this means that $\mathrm{span} \{ \vk_i \}_{i=1}^K$ is the ($K$-dimensional) null space of $\mL(\tilde \vh)$ for all $\tilde \vh \in \reals^{N-K}$, since the task-preserving transformation does not alter network topology.

With these observations, the right hand side of \eqref{eq:strong-convexity-condition} reduces,
by the Courant-Fischer min-max theorem, to
\begin{align}
    \min_{\tilde \vu \in \reals^{N-K}:\, \|\tilde \vu\|=1}
    &
    \tilde \vu^T \frac{\partial^2 C}{\partial \tilde \vh^2} \tilde \vu
    \nonumber \\
    = p^2
    &
    \min_{\tilde \vu \in \reals^{N-K}:\, \|\tilde \vu\|=1}
    \tilde \vu^T \mU^T \mL \mU \tilde \vu
    \nonumber \\
    = p^2
    &
    \max_{U:\, \mathrm{dim}(U) = N-K} \, \min_{\vu \in U:\, \|\vu\|=1}
    \vu^T \mL \vu
    \nonumber \\
    \label{eq:lambda-K+1-L-derivation}
    = p^2
    &
    \lambda_{K+1},
\end{align}
and $\lambda_n$ is the $n$th eigenvalue of $\mL$, when ordered as $\lambda_1 \leq \lambda_2 \leq \ldots \leq \lambda_N$.

We have shown $\lambda_{K+1}(\mL(\tilde \vh))$ is positive for all values of $\tilde \vh$.
It remains to show that $\lambda_{K+1}$ has a positive lower bound for all $\tilde \vh \in \tilde \mcB$. This is immediate, however, from the fact that the eigenvalues $\{\lambda_i\}$, ordered by size as we have done here, are continuous functions of $\mL$ \cite[Ch. 1, \S3]{rellich1969perturbation}, and $\tilde \mcB$ is compact: then $\lambda_{K+1}$ must attain some minimum $\lambda_{K+1}^* > 0$ on $\tilde \mcB$.
Plugging \eqref{eq:lambda-K+1-L-derivation} into \eqref{eq:strong-convexity-condition}, we have
\begin{align*}
\min_{\tilde \vu \in \reals^{N-K}:\, \|\tilde \vu\|=1}
\tilde \vu^T \frac{\partial^2 C}{\partial \tilde \vh^2} \tilde \vu
&= p^2 \lambda_{K+1} 
\\
& \geq 
p^2 \lambda_{K+1}^*
\end{align*}
for all $\tilde \vh \in \tilde \mcB$.
So $C$ is strongly convex with respect to the reduced coordinates $\tilde \vh$ on the sublevel set of $C^0$.

Finally, we note that the linear relation \eqref{eq:h-tilde-defn} implies that the dynamics of gradient descent on $C$ with respect to $\vh$ is isomorphic to the dynamics gradient descent on $C$ with respect to $\tilde \vh$.
Thus, although $C$ is demonstrably \emph{not} strongly convex with respect to the standard task-preserving manifold coordinates $\vh \in \reals^N$, synaptic balancing dynamics nonetheless inherits the desirable properties of gradient descent on a strongly convex function.

Because strongly convex functions possess a unique minimum which is exponentially stable under gradient descent, every global minimizer $\mcW^*$ of $C$ on the task-preserving manifold is (i) the unique solution to \eqref{eq:balance-condition} and $\eqref{eq:sum-hi-zero-connected-component}$ on the task-preserving manifold and (ii) globally exponentially stable under synaptic balancing dynamics.
\end{proof}

\subsection{Normal matrices are balanced}
\label{sec:normal-matrices-balanced}

As discussed in \S\ref{sec:synaptic-balancing-dynamics}, maximizing the robustness is equivalent to minimizing the total cost, which can be written as a matrix Frobenius norm:

\begin{equation}\label{eq:normal_cost}
    C = \sum_{ij} \sigma_j^2 J_{ij}^2 = || \mJ \mathbf{\Sigma} ||_{F}^2
\end{equation}

where we have introduced the diagonal matrix of moments $\Sigma = \mathbf{diag \{\vsigma\}}$.
If $\mJ \mathbf{\Sigma}$ is a normal matrix, then the cost \eqref{eq:normal_cost} can be written in terms of its eigenvalues $\lambda_j$:

\begin{equation}
    || \mJ \mathbf{\Sigma} ||_{F}^2 = \sum_j |\lambda_j|^2
\end{equation}

Now consider an arbitrary diagonal similarity transformation of the weight matrix $\mJ$, resulting in a potentially non-normal matrix $\widetilde{\mJ \mathbf{\Sigma}} = \mQ^{-1} \mJ \mQ \mathbf{\Sigma}$.
Since the diagonal matrices $\mQ$ and $\mathbf{\Sigma}$ will commute, this transformation preserves the eigenvalues of $\mJ \mathbf{\Sigma}$.
If the matrix $\widetilde{\mJ \mathbf{\Sigma}}$ has Schur decomposition with unitary matrix $\mU$ and upper triangular matrix $\mathbf{\Lambda}$: 

\begin{equation}
    \widetilde{\mJ \mathbf{\Sigma}} = \mU \mathbf{\Lambda} \mU^{-1},
\end{equation}

then we can see that the Frobenius norm and thus the cost $C$ of this transformed matrix $\widetilde{\mJ \mathbf{\Sigma}}$ is equal to that of the matrix $\mathbf{\Lambda}$, which is always greater than or equal to the sum of the squared eigenvalues of the normal $\mJ \mathbf{\Sigma}$:

\begin{equation}
    ||\widetilde{\mJ \mathbf{\Sigma}} ||_{F}^2 = ||\mathbf{\Lambda} ||_{F}^2 \geq \sum_j |\lambda_j|^2
\end{equation}
since the eigenvalues of the matrix $\mJ \mathbf{\Sigma}$, $\widetilde{\mJ \mathbf{\Sigma}}$ and $\mathbf{\Lambda}$ are all shared.  
The inequality is saturated when the transformation $\mQ$ is unitary.
Therefore, we find that an arbitrary diagonal similarity transformation of the matrix $\mJ$ can not decrease the cost, in the case that $\mJ \mathbf{\Sigma}$ is normal.

Note that in the linear case, $\mathbf{\Sigma} = \mI$ and this argument holds for arbitrary invertible matrices $\mQ$ and normal weight matrices $\mJ$.

\section{Bounds on minimum value}
\label{sec:analytic-bounds}

In general, we are not aware of an exact solution to the minimum value $C^*$ of the total cost $C$ of synaptic balancing. 
In this section, we calculate upper and lower bounds on $C^*$ stated in Prop. \ref{prop:bounds}.
These bounds are computable based on the current state of the cost matrix, and so they help approximate the degree to which synaptic balancing will improve the robustness of a particular weight configuration.

\subsection{Lower bound}
\label{sec:geometric-mean-lower-bound-p}

We now state a basic bound on the synaptic costs that are attainable on the task-preserving manifold.

\begin{lemma}
\label{lemma:cij-geometric-mean-lower-bound}
A synaptic cost matrix $\mC$ evolving under synaptic balancing from an initial configuration $\mC^0$ satisfies, for all pairs of neurons $i,j$ and for all $t \geq 0$,
\begin{align}
    \label{eq:cij-bar-symmetric-lower-bound}
    \bar c_{ij}(t) \geq 2 \hat c_{ij},
\end{align}
where $\bar c_{ij}(t) = c_{ij}(t) + c_{ji}(t)$ \eqref{eq:cij-bar-defn} and $\hat c_{ij} = \sqrt{c_{ij}^0 c_{ji}^0}$ \eqref{eq:cij-geometric-mean}.
\end{lemma}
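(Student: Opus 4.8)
The plan is to identify the pairwise product $c_{ij}(t)\,c_{ji}(t)$ as a quantity conserved by the synaptic balancing flow, and then to obtain \eqref{eq:cij-bar-symmetric-lower-bound} as an immediate consequence of the arithmetic–geometric mean inequality. Throughout I work under the standing assumption of power-law synaptic costs \eqref{eq:cij-power-law}, as in Prop. \ref{prop:bounds}.

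First I would record the conservation law. Since the costs are of power-law form and the weights evolve by the task-preserving transformation, we have $c_{ij}(t) = c^0_{ij}\,e^{p(h_j(t)-h_i(t))}$, exactly as in \eqref{eq:cij-sensitivity-h}. Multiplying the expressions for $c_{ij}(t)$ and $c_{ji}(t)$, the factors $e^{p(h_j-h_i)}$ and $e^{p(h_i-h_j)}$ cancel, giving $c_{ij}(t)\,c_{ji}(t) = c^0_{ij}\,c^0_{ji}$ for all $t\geq 0$. (Equivalently, one can argue differentially: from \eqref{eq:Jij-dot} and \eqref{eq:partial-c-partial-h} one has $\dot c_{ij} = p\,c_{ij}(g_j-g_i)$ and $\dot c_{ji} = p\,c_{ji}(g_i-g_j)$, so $\frac{d}{dt}\big(c_{ij}c_{ji}\big) = p\,c_{ij}c_{ji}(g_j-g_i) + p\,c_{ij}c_{ji}(g_i-g_j) = 0$, and the product stays pinned at its initial value $c^0_{ij}c^0_{ji}$.)

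Next I would apply AM–GM to the two nonnegative reals $c_{ij}(t)$ and $c_{ji}(t)$:
\begin{align*}
\bar c_{ij}(t) = c_{ij}(t) + c_{ji}(t) \;\geq\; 2\sqrt{c_{ij}(t)\,c_{ji}(t)} \;=\; 2\sqrt{c^0_{ij}\,c^0_{ji}} \;=\; 2\hat c_{ij},
\end{align*}
using the conservation law in the middle equality and the definition $\hat c_{ij} = \sqrt{c^0_{ij}c^0_{ji}}$ \eqref{eq:cij-geometric-mean}. Since $i,j$ and $t\geq 0$ were arbitrary, this is exactly \eqref{eq:cij-bar-symmetric-lower-bound}.

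I do not anticipate a genuine obstacle here; the only thing worth stating carefully is that the argument relies on the power-law form, which is what makes $c_{ij}(t) = c^0_{ij}e^{p(h_j-h_i)}$ available and hence makes the product a constant of motion — for a general cost $c_{ij}(J_{ij})$ this invariance would fail. It is also worth noting, for use later, that equality holds precisely when $c_{ij}(t) = c_{ji}(t) = \hat c_{ij}$, which is why the symmetrized matrix $\hat{\mC}$ with entries $\hat c_{ij}$ is the candidate equilibrium whenever symmetry is attainable on the task-preserving manifold.
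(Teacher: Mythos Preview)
Your proposal is correct and follows essentially the same route as the paper: both identify the invariant $c_{ij}(t)c_{ji}(t)=c^0_{ij}c^0_{ji}$ (from \eqref{eq:cij-sensitivity-h}) and then bound the sum from below by twice the geometric mean. The only cosmetic difference is that the paper phrases the second step as an explicit constrained minimization of $c_{ij}+c_{ji}$ subject to the product constraint and solves it analytically, whereas you invoke AM--GM directly; these are the same argument.
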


\begin{proof}
For any two neurons $i$ and $j$, the product of the reciprocal costs $c_{ij} c_{ji}$ is conserved by the task-preserving transformation.
This product-conserving property follows directly from writing out the product of power-law costs \eqref{eq:cij-power-law} in terms of the task-preserving transformation \eqref{eq:task-preserving-transformation}:
\begin{align}
\begin{split}
\label{eq:product-conserving-constraint}
  c_{ij} c_{ji}
  &= c_{ij}^0 c_{ji}^0 e^{h_j - h_i} e^{h_i - h_j}  \\
  &= c_{ij}^0 c_{ji}^0.
\end{split}
\end{align}

To derive a general lower bound on $\bar c_{ij}$, let $c^{*}_{ij}$ and $c^{*}_{ji}$ be the optimizers of the two-variable minimization problem incorporating the product-conserving constraint: 
\begin{align}
\begin{split}
\label{eq:Cprod-separated-problem}
\text{minimize} \quad & c_{ij} + c_{ji} \\
\text{subject to} \quad & c_{ij} c_{ji} = c^0_{ij} c^0_{ji} \\
& c_{ij}, c_{ji} \geq 0.
\end{split}
\end{align}
This, in turn, is equivalent to the single-variable problem
\begin{align}
\label{eq:Cprod-separated-problem-single-variable}
\begin{split}
\text{minimize} \quad & c_{ij} +   c^0_{ij} c^0_{ji} c_{ij}^{-1} \\
\text{subject to} \quad & c_{ij} \geq 0,
\end{split}
\end{align}
which is convex.
Analytically minimizing \eqref{eq:Cprod-separated-problem-single-variable}, we find that 
\eqref{eq:Cprod-separated-problem} has the unique solution
\begin{align*}
    \begin{split}
    c^{*}_{ij} = c^{*}_{ji} = \sqrt{c^0_{ij} c^0_{ji}} = \hat c_{ij}.
    \end{split}
\end{align*}
Thus, $\bar c_{ij} = c_{ij} + c_{ji} \geq c^*_{ij} + c^*_{ji} = 2 \hat c_{ij}$ for all $c_{ij}$, $c_{ji}$ accessible under the task-preserving transformation; in particular, the inequality holds along the trajectory of synaptic balancing.
\end{proof}

In short, the synaptic cost attains a minimum on the set of product-conserving matrices at the symmetric matrix whose elements are obtained by replacing each pair of initial reciprocal synaptic costs with its geometric mean.

As an additional consequence, we have that the geometric-mean matrix $\hat \mC$ is in fact the equilibrium cost matrix of synaptic balancing whenever $\hat \mC$ is accessible by the task-preserving transformation.

\begin{lemma}
\label{lemma:C-lower-bound-c-hat}
For all weight configurations on the task-preserving manifold, the total cost $C$ satisfies the lower bound 
\begin{align}
    \label{eq:symmetric-lower-bound-task-manifold}
    C \geq \sum_{ij} \hat c_{ij.}
\end{align}
If there is a weight configuration $\mcW^*$ on the task-preserving manifold whose cost matrix $\mC^*$ is symmetric, then:  $\mC^* = \hat \mC$;
equality is attained in \eqref{eq:symmetric-lower-bound-task-manifold} at $\mC^*$; and
$\mcW^*$ is a global minimizer of total cost on the task-preserving manifold.
\end{lemma}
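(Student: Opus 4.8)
The plan is to derive the lemma directly from the product-conservation identity for reciprocal synaptic costs, $c_{ij} c_{ji} = c_{ij}^0 c_{ji}^0$ \eqref{eq:product-conserving-constraint}, which is an invariant of the task-preserving transformation for every ordered pair $(i,j)$, together with Lemma \ref{lemma:cij-geometric-mean-lower-bound}.

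First I would establish the global lower bound. Writing the total cost in symmetrized form, $C = \sum_{ij} c_{ij} = \tfrac{1}{2}\sum_{ij} \bar c_{ij}$, and using the pairwise estimate $\bar c_{ij} = c_{ij} + c_{ji} \geq 2\sqrt{c_{ij} c_{ji}} = 2\sqrt{c_{ij}^0 c_{ji}^0} = 2\hat c_{ij}$ --- which is AM--GM combined with the product-conservation identity \eqref{eq:product-conserving-constraint}, and coincides with Lemma \ref{lemma:cij-geometric-mean-lower-bound} along the balancing trajectory --- yields $C \geq \sum_{ij}\hat c_{ij}$ for every weight configuration on the task-preserving manifold, proving \eqref{eq:symmetric-lower-bound-task-manifold}.

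Next, suppose $\mcW^*$ lies on the task-preserving manifold and its cost matrix $\mC^*$ is symmetric, i.e.\ $c_{ij}^* = c_{ji}^*$ for all $i,j$. Substituting this equality into the invariant $c_{ij}^* c_{ji}^* = c_{ij}^0 c_{ji}^0$ gives $(c_{ij}^*)^2 = c_{ij}^0 c_{ji}^0$; since synaptic costs are nonnegative, taking the positive square root gives $c_{ij}^* = \sqrt{c_{ij}^0 c_{ji}^0} = \hat c_{ij}$, hence $\mC^* = \hat \mC$. It then follows that $C(\mcW^*) = \sum_{ij} c_{ij}^* = \sum_{ij}\hat c_{ij}$, so equality is attained in \eqref{eq:symmetric-lower-bound-task-manifold} at $\mcW^*$; and since this is precisely the lower bound over the entire task-preserving manifold established in the previous step, $\mcW^*$ globally minimizes the total cost on the manifold.

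This argument is essentially bookkeeping once Lemma \ref{lemma:cij-geometric-mean-lower-bound} and the product-conservation identity are available, so I expect no serious obstacle. The only point needing a moment's care is the degenerate case $c_{ij}^0 = 0$: such a cost stays zero under the transformation, and symmetry of $\mC^*$ then forces $c_{ji}^* = 0$ as well, which is consistent with $\hat c_{ij} = \hat c_{ji} = 0$ only when $\mC^0$ already has a symmetric support pattern --- a tacit precondition for any symmetric $\mC^*$ to exist on the manifold, worth flagging explicitly in the writeup.
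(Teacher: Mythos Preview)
Your proposal is correct and follows essentially the same route as the paper: symmetrize the sum as $C = \tfrac{1}{2}\sum_{ij}\bar c_{ij}$, apply Lemma~\ref{lemma:cij-geometric-mean-lower-bound} termwise, and then use the product-conservation identity \eqref{eq:product-conserving-constraint} to pin down any symmetric $\mC^*$ as $\hat\mC$. Your explicit flagging of the degenerate case $c_{ij}^0 = 0$ (which forces a symmetric support pattern on $\mC^0$ for a symmetric $\mC^*$ to exist) is a worthwhile clarification that the paper leaves implicit.
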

\begin{proof}
The total cost is $C = \sum_{ij} c_{ij} = \frac{1}{2}\sum_{ij} \bar c_{ij}$,
and \eqref{eq:symmetric-lower-bound-task-manifold} is obtained by summing \eqref{eq:cij-bar-symmetric-lower-bound} over all synapses $i,j$.
If there a a symmetric cost matrix $\mC^*$ on the task-preserving manifold, then by the reciprocal-product-conserving constraint of the task-preserving transformation \eqref{eq:product-conserving-constraint}, it must be the geometric-mean matrix $\hat \mC$.
The remaining claims in the lemma follow immediately.
\end{proof}

Lemma \ref{lemma:C-lower-bound-c-hat} is incorporated in the main text as \eqref{eq:C-star-lower-bound-c-hat} in Prop. \ref{prop:bounds}.

\subsection{Upper bound}
Next we derive an upper bound on the optimal value of the total cost, computable based on the current state of the cost matrix.
Our approach is to bound the curvature of the sensitivity on the sublevel set of the initial total cost $C^0$ and then to minimize a quadratic upper envelope function using this bound.

Recall that $\frac{\partial^2 C}{\partial \vh ^2} = p^2 \mL$ \eqref{eq:C-hessian-laplacian}, i.e. the Hessian of the total cost takes the form of a Laplacian matrix whose elements are drawn from the conductance matrix $\bmC$. 
\eqref{eq:cij-bar-defn}.
Let $\lambda_\mathrm{max}$ be the maximum eigenvalue of the Hessian.
A simple upper bound is provided by Gershgorin's theorem:
\begin{align}
\lambda_\mathrm{max}
&\leq p^2 \max_i \sum_{j=1}^N |L_{ij}| \nonumber \\
&= p^2 \max_i \left( |L_{ii}| +  \sum_{j: j\neq i} |L_{ij}|  \right) \nonumber \\
&= p^2 \max_i \left( \sum_{j: j\neq i }  \bar c_{ij} +  \sum_{j: j\neq i }  \bar c_{ij} \right) \nonumber \\
\label{eq:C-bar-lambda-max-bound}
&= 2 p^2 \max_i \sum_{j: \, j \neq i}  \bar c_{ij}.
\end{align}

Since the maximum column sum of a matrix is upper bounded by the total sum of elements in the matrix, and in the case of $\bmC$ that total sum decreases during gradient descent, we have:
\begin{align}
    \max_i \sum_{j: j\neq i }  \bar c_{ij}
    &\leq \sum_{ij} \bar c_{ij} \nonumber \\
    &= 2 C \nonumber \\
\label{eq:C-bar-max-column-sum-bound}
    &\leq 2 C^0.
\end{align}
This bound is too large by a factor of roughly $N$ when synaptic costs are uniformly distributed throughout the network, but is (nearly) tight in the case of, for example, a hub-and-spoke network with a single central neuron projecting to all others, which each have a constant number of outgoing synapses.

Combining \eqref{eq:C-bar-lambda-max-bound} and \eqref{eq:C-bar-max-column-sum-bound}, we have in general that
$\lambda_\mathrm{max} \leq M$ where 
\begin{align}
\label{eq:lambda-max-upper-bound}
M 
&= 4p^2 C^0.
\end{align}

We use the bound on the maximum eigenvalue to obtain an upper bound on $C^*$, the optimal value of the total cost.
The fact that the curvature of $C$ as a function of $\vh$ is bounded implies that there exists a quadratic function $Q(\vh)$ tangent to $S$ at $\vh=0$,
\begin{align}
    \label{eq:upper-envelope-defn}
    Q(\vh) = C^0 + \vh^T \left. \frac{\partial C}{\partial \vh}\right|_{\vh=0}  + \frac{M}{2} \|\vh \|_2^2
\end{align}
such that $Q(\vh) \geq C(\vh)$ for every $\vh$ in the sublevel set of $C^0$ \cite[\S 9.1.2, eq. 9.13]{boyd2004convex}.
This is known as a quadratic upper envelope function, and
its minimum value $Q^*$ is
\begin{align}
\label{eq:upper-envelope-minimum}
Q^*
&= C^0 - \frac{1}{2M} \left\| \left.  \frac{\partial C}{\partial \vh} \right|_{\vh=0} \right\|_2^2 .
\end{align}
Via the gradient descent rule \eqref{eq:gradient-descent}, the initial gradient of the total cost is
\begin{align}
    \begin{split}
    \label{eq:initial-gradient}
    \left.  \frac{\partial C}{\partial h_k} \right|_{\vh=0}
    = -p g_k^0.
    \end{split}
\end{align}

By construction, the minimum of the upper envelope function upper bounds the minimum of $C$, i.e., $C^* \leq Q^*$.
Plugging in \eqref{eq:upper-envelope-minimum}, written in terms of  \eqref{eq:lambda-max-upper-bound} and \eqref{eq:initial-gradient}, we obtain the bound
\begin{align*}
\label{eq:C-star-lower-bound-supp}
    C^*
    &\leq Q^* \\
    &= C^0 - \frac{1}{8 C^0} \|\vg^0 \|_2^2,
\end{align*}
which is \eqref{eq:C-star-upper-bound} in the main text.

\section{Network training details}
\label{sec:network-training-details}

We constructed a context-dependent integration task with three input variables: the context $\va$ and two signals $\vs^{(1)}$ and $\vs^{(2)}$. Each variable was encoded as a pair of indicator-style inputs, yielding six dimensions of network inputs.
Each pair $(a_1, a_2)$, $(s^{(1)}_1, s^{(1)}_2)$, and $(s^{(2)}_1, s^{(2)}_2)$ independently took the value $(0,1)$ or $(1,0)$, plus Gaussian noise, for the duration ($T=50$) of each trial.
Three Boolean variables corresponded to eight total trial conditions.

The task target $\vz$ (also encoded as indicator variables $z_1, z_2$) was the integral of the noisy signal over time, as gated by the context:
\begin{align*}
    z_i(t) =
    \left\{
    \begin{array}{ll}
         \int_0^t s^{(1)}_i (t') dt', &  \va = (0, 1),\\
         \int_0^t s^{(2)}_i (t') dt', &  \va = (1, 0).
    \end{array}
    \right.
\end{align*}
for $i=1,2$.

Networks of the form \eqref{eq:neural-dynamics}, with rectified linear activation functions, were initialized with i.i.d. weights $J_{ij}\sim \mcN(0,N^{-1})$ and were trained via gradient descent to minimize an objective consisting of the squared-error loss function $\sum_{i,t} (y_i(t) - z_i(t))^2$, plus a regularization term $\lambda \sum_{ij} J_{ij}^2$.
Networks in Fig. \ref{fig:trained-networks}, with $N=256$ neurons, learned the task after being trained with a fixed learning rate of 0.003 for 1,600 training iterations. 

Average gains $\{\sigma^2_j\}_{j=1}^N$ were calculated across conditions and trials on a separate trial dataset not used in the training or evaluation of the network.
The balanced network was taken to be the final time step of a numerical solution to the ODE \eqref{eq:Jij-dot}, using neural gradients \eqref{eq:gk-costs} and synaptic costs \eqref{eq:cij-sensitivity}, with the original (trained) network as initial condition.
Fig. \ref{fig:trained-networks}b was attained by running the original and balanced networks with varying levels of additive Gaussian noise injected into the recurrent neural dynamics.

\end{document}